\newcommand\mydots{\makebox[1em][c]{.\hfil.\hfil.}}
\newtheorem{theorem}{Theorem}
\newtheorem{corollary}[theorem]{Corollary}
\newtheorem{lemma}[theorem]{Lemma}
\newtheorem{example}[theorem]{Example}
\newtheorem{definition}[theorem]{Definition}
\newtheorem{proposition}[theorem]{Proposition}
\newcommand{\prg}{|\mathtt{prg}|}
\newcommand{\pre}{\mathit{pre}}
\newcommand{\img}{\mathit{img}}
\newcommand{\stcomp}[1]{\widetilde{#1}}
\newcommand{\stcompc}[1]{{#1}^{\raisebox{-1pt}{$\scriptstyle\complement$}}}
\newcommand{\abs}[1]{\uparrow\!\!#1}
\def\supsharp{^{\raisebox{-5pt}{$\scriptstyle\sharp$}}}
\def\supflat{^{\raisebox{-5pt}{$\scriptstyle\flat$}}}
    \pgfplotsset{compat=1.3,
        layers/my layer set/.define layer set={
            background,
            main,
            foreground,
            forforeground
        }{
        },
        set layers=my layer set,
    }
\definecolor{mycolor}{rgb}{0.02,0.4,0.7}
\definecolor{mycolor2}{rgb}{0.0,0.5,0.0}
\definecolor{mycolor3}{rgb}{0.7,0.4,0.02}
\definecolor{mycolor4}{rgb}{0.7,0.02,0.4}
\title{Probabilistic Output Analyses for Deterministic Programs\\ --Reusing Existing Non-probabilistic Analyses}
\author{Maja Hanne Kirkeby\thanks{This work is supported by The Danish Council for Independent Research, Natural Sciences, grant no.~DFF 4181-00442.}
\institute{Roskilde University, Denmark}
\email{kirkebym@acm.org}
}
\begin{document}
\maketitle

\begin{abstract}
We consider reusing established non-probabilistic output analyses (either forward or backwards) that yield over-approximations of a program's pre-image or image relation, e.g., interval analyses. We assume a probability measure over the program input and present two techniques (one for forward and one for backward analyses) that both derive upper and lower probability bounds for the output events. We demonstrate the most involved technique, namely the forward technique, for two examples and compare their results to a cutting-edge probabilistic output analysis.
\end{abstract}

\section{Introduction}
Output analyses infer information about program outputs either as main purpose, e.g., interval analysis~\cite{Cousot1977} and octagon analysis~\cite{Mine2006}, or as bi-product, \eg, sign-analysis~\cite{Nielson2007}. Output analyses are also used to construct other analyses, e.g., resource analyses where resource instrumented versions of the program are analysed using output analyses~\cite{Kirkeby2015}.
\emph{The aim of this paper} is to reuse static (non-probabilistic) output analyses to infer information about the probabilities of a program's output when knowing the input probabilities, i.e., providing an approach to mechanically obtain probabilistic output analyses for deterministic programs.  

Previously, probabilistic analyses have mainly focused on analysing probabilistic programs. Sankaranarayan \etal{}~\cite{conf:pldi:SankaranarayananC2013} and Adje \etal{}~\cite{Adje2013} present analyses optimized for each their type(s) of probability measures and both provide upper and lower probability bounds of the program output. In this paper, we focus on a subset of probabilistic programs, namely the deterministic ones, but instead of presenting a specified probability analysis, we present an approach to reuse analyses to create probabilistic analyses for any type of probability measures.

More general results exist in the form of probabilistic abstract interpretation frameworks presented by Monniaux~\cite{Monniaux2000} and by Cousot and Monerau~\cite{CousotM12}. 
Both these frameworks describe how to extend non-probabilistic abstract interpretation analyses for deterministic programs to probabilistic analyses for probabilistic programs; their resulting analyses provide upper probability bounds of output events (and not lower probability bounds). Both require a manual development to handle the randomness in the programs, e.g., random number generators and the probabilistic operations. 
In comparison, the techniques presented in this paper handle only deterministic programs (disallowing, e.g., random number generators) but this choice allows us to consider the existing analysis tools as black-box analyses and it does not require any manual developments. Thus, the techniques are amenable to implementation.
Furthermore, they each induce not only upper probability bounds of output events, but also lower probability bounds. We will compare our results with those produced by Monniaux's experimental lifting of an interval analysis~\cite{Monniaux2000}.
\paragraph{Contributions and Overview. }
After preliminaries (Section \ref{sec:preliminaries}), we present two novel techniques for inducing both upper and lower probability bounds of output events of deterministic programs: one using backwards analyses (Section \ref{sec:backward}) and one using forward analyses (Section \ref{sec:forward}). We demonstrate the forward technique, the most involved of the two, by two examples (Section \ref{sec:casestudyinterval}); one using sign analysis in combination with termination analysis, and one using interval analysis. When comparing the probability bounds we produce with the ones 
 produced by Monniaux's experimental probabilistic analysis~\cite{Monniaux2000}, the presented approaches infer equally good and better upper probability bounds. Furthermore, when combined with non-termination analyses, they produce novel non-trivial lower bounds, \ie, bounds greater than 0.

\section{Preliminaries}\label{sec:preliminaries}

We let $X, Y, A, B, T$ (sometimes indexed) denote sets. A set $\stcompc{A}$ is the complement of $A$ with respect to a set (indicated by the context), e.g., $X$. 
We denote a \emph{countable infinite series} by $X_1, X_2, \ldots$. 
A \emph{partition} $T$ over a set $X$ is a set of nonempty and pairwise disjoint subsets of $X$ such that $\bigcup T = X$; when $T$ is  a finite/countable/infinite set then $T$ is a \emph{finite/countable/infinite partition}. For two partitions $T,T'$ over $X$, we say that $T$ is \emph{finer} than $T'$ if every element of $T$ is a subset of an element in $T'$. When $T$ over $X$ consists of all singletons, \ie, $T=\{\{x\}\mid x\in X\}$, it is a \emph{singleton-partition}; the finest partition $T$ over $X$ is the singleton-partition.

A \emph{$\sigma$-algebra} $\mathcal{X}$ is a non-empty subset of $\wp(X)$ that is closed under countable unions and complements, \ie{} if $A_1, A_2, \ldots \in \mathcal{X}$, then $\bigcup^{\infty}_{n=0} A_n \in \mathcal{X}$, and if $A\in \mathcal{X}$, then $\stcompc{A} \in\mathcal{X}$. Note that if $\mathcal{X}$ is a $\sigma$-algebra over $X$, then $X \in  \mathcal{X}$ and $\emptyset \in  \mathcal{X}$, and, furthermore, that a $\sigma$-algebra $\mathcal{X}$ is closed under countable intersections, \ie{}  if $A_1, A_2, \ldots \in \mathcal{X}$, then $\bigcap^{\infty}_{n=0} A_n \in \mathcal{X}$. 
%
Given a collection of sets $A \subseteq \wp(X)$, the $\sigma$-algebra \emph{generated}  by $A$, written $\sigma(A)$, is the intersection of all $\sigma$-algebras containing $A$.

A \emph{measurable space} is a pair $(X,\mathcal{X})$ whereby \emph{the sample space} $X$ is a set and $\mathcal{X} \subseteq \wp(X)$ is a $\sigma$-algebra.
The elements of $\mathcal{X}$ are called \emph{events}. 
A \emph{measure} $\mu$ on a measurable space $(X,\mathcal{X})$ is a function $\mu\colon \mathcal{X} \rightarrow \mathbb{R}^+$ that is countably additive, \ie{} for every countable set of pairwise disjoint sets $A_1, A_2, \ldots \in \mathcal{X}$, $\mu(\cup^{\infty}_{i=1}A_i) = \sum^{\infty}_{i=1} \mu(A_i)$, and $\mu(\emptyset)=0$. 
Note that a {measure} $\mu\colon \mathcal{X} \rightarrow \mathbb{R}^+$ is monotone, \ie{}, $A \subseteq B \Rightarrow \mu(A) \leq \mu(B)$, whenever $A, B \in \mathcal{X}$.
A \emph{measure space} $(X,\mathcal{X},\mu)$ is a measurable space $(X,\mathcal{X})$ with a measure $\mu$ on it.
Given two measure spaces $(X,\mathcal{X},\mu_{\mathcal{X}})$ and $(Y,\mathcal{Y},\mu_{\mathcal{Y}})$, their \emph{product measure space} is $(X \times Y, \sigma( \mathcal{X} \times \mathcal{Y}), \mu)$ where
$\mu(A\times B) =  \mu_{\mathcal{X}}(A) \cdot \mu_{\mathcal{Y}}(B)$ for all $A \in \mathcal{X}$ and $B \in \mathcal{Y}$.
A measure $\mu$ on a measurable space $(X,\mathcal{X})$ is \emph{discrete} if its weight is on at most countably many elements, \ie{} there exists a countable set $A \in \mathcal{X}$ such that $\mu(A)=\mu(X)$, and  \emph{continuous} if the weights of all countable sets are 0, \ie{} $\mu(A)=0$ for all countable sets $A \in \mathcal{X}$.

A measure $\mu$ on $(X,\mathcal{X})$ is a \emph{probability measure}
if $\mu(X) = 1$;
in addition, $\mu(A) = 1 -\mu(\stcompc{A})$.
A \emph{probability space} $(X,\mathcal{X},\mu)$ is a measure space wherein the measure $\mu$ is a probability measure.

Let $(X,\mathcal{X})$ and $(Y,\mathcal{Y})$ be measurable spaces:
A function $f\colon X \rightarrow Y$ is \emph{measurable} if $\mathit{pre}_{f}(B) \in \mathcal{X}$ whenever $B \in \mathcal{Y}$, where the \emph{pre-image function of $f$} $\mathit{pre}_{f}$ is defined by $\mathit{pre}_{f}(B) \triangleq \{x \in X \mid f(x) \in B\}$ and the \emph{image function} $\img_f(A) \triangleq \{f(x) \in Y \mid x \in A\}$.  
Often we denote a measurable function by $f\colon (X,\mathcal{X}) \rightarrow (Y,\mathcal{Y})$ and we refer to elements in $\mathcal{X}$ as \emph{input events} and elements of $\mathcal{Y}$ as \emph{output events}
A probability space $(X,\mathcal{X},\mu)$ and a measurable function $f\colon (X,\mathcal{X}) \rightarrow (Y,\mathcal{Y})$, defines a probability measure $\mu_f$, called an \emph{output probability measure}, $\mu_f(A) \triangleq \mu(\mathit{pre}_{f}(A))$ of $f$ whenever $A \in \mathcal{Y}$, {\eg, \cite{Butler2018}}. 

\section{Reusing existing analysis}\label{sec:reusingexistinganalysis}
A program \texttt{prg} may have many semantics, but in this work, we consider the semantics $\prg$ to be a relation between input $X$ and output $Y$, \ie{}, a set of input-output pairs $\prg \subseteq X \times Y$.    
For a deterministic program, the input-output relation is \emph{functional}, \ie{}, each input is related to at most one output.  
Programs that terminate for all inputs $x\in X$ define \emph{total} relations,
 \ie each input relates to at least one output.  
Depending on the analysed language, $Y$ could contain special elements for program results that are not per se outputs, for instance, error or nontermination. Without loss of generality, we limit the focus of this paper to the class of programs with total input-output relations. 
Thus, assuming the program semantics to be both measurable ~\cite{Kozen1985,Monniaux2000}, \ie $\prg \colon (X,\mathcal{X}) \rightarrow (Y,\mathcal{Y})$, and total. 
\begin{example}\label{ex:f:total+measurable}
Let $(X,\{\emptyset,X\})$ and $(Y,\wp(Y))$ be measurable spaces where $X=\{a,b\}$ and $Y=\{c,d\}$ and let
$f\!\colon\! X \! \rightarrow\! Y$ be a function whereby $f(a) {=} f(b) {=} c$. 
 The function $f$ is total since it is defined for each input, \ie, $a$ and $b$, and it is measurable since the pre-images of every output event, \ie, $A \in \{\emptyset,\{c\},\{d\},\{c,d\}\}$, is an input event, \ie, $\pre_f(A)\in \{\emptyset,\{a,b\}\}$;  $\pre_{f}(\emptyset)=\pre_{f}(\{d\})=\emptyset$ and $\pre_{f}(\{c\})=\pre_{f}(\{c,d\})=\{a,b\}$.
 \end{example}

According to our aim, we assume to know the input probability measure $\mu\colon \{\emptyset,X\} \rightarrow [0,1]$. 
Based on such an input probability measure $\mu$, we recall that the probability of an output event $A \in \mathcal{Y}$ is defined as the input probability of $A$'s pre-image $\pre_{\prg}(A)$, namely, $\mu(\pre_{\prg}(A))$. 
\begin{example}[Example \ref{ex:f:total+measurable} continued]\label{ex:f:outputmeasure} Let input and output spaces and the total measurable function $f$ be as in Example~\ref{ex:f:total+measurable}.
In addition, let $\mu\colon \mathcal{X} \rightarrow [0,1]$ be a trivial input probability measure such that $\mu(\emptyset)=0$ and $\mu(\{a,b\})=0$.     
 The probability of the output events $\{\emptyset,\{c\},\{d\},\{c,d\}\}$ are $\mu_f(\emptyset) = \mu_f(\{d\})= 0$ and $\mu_f(\{c\}) = \mu_f(\{c,d\}) = 1$, e.g., $\mu_f(\{d\}) = \mu(\pre_f(\{d\})) = 
\mu(\emptyset)= 0$ or $\mu_f(\{c\}) = \mu(\pre_f(\{c\})) = 
\mu(\{a,b\})= 1$.
\end{example}
 
%
This paper is based on the idea of ``reusing an existing analysis'' and an analysis $f$ is typically given in some abstract domain, \eg, using abstract interpretation~\cite{Cousot1977,Cousot1979}, using an abstraction $\alpha$  from the concrete domain to the abstract domain. We will, in addition, assume a concretization function $\gamma$  to avoid complications of the abstract domain. For instance, a forward interval analysis is actually a function between intervals $f\colon \mathcal{I} \rightarrow \mathcal{I}$ rather than a function between sets of reals $g\colon \wp(\mathbb{R}) \rightarrow \wp(\mathbb{R})$; however, we assume to compose $f$ with the concretization $\gamma\colon \mathcal{I}\rightarrow \wp(\mathbb{R})$ and the abstraction $\alpha\colon \wp(\mathbb{R}) \rightarrow \mathcal{I}$, achieving the analysis, \ie, $g = \gamma \comp f \comp \alpha$.
The analyses may be either forward or backwards; 
we consider the analyses to be given as perhaps non-measurable functions $\pre_{\prg}\supsharp\colon \wp(Y) \rightarrow \wp(X)$ (backwards) or functions $\img_{\prg}\supsharp\colon \wp(X) \rightarrow \wp(Y)$ (forwards) that produce supersets\footnote{A set $A$ is a superset of a set $B$ if $A \supseteq B$.} of the programs pre-image $\pre_{\prg}$ and image $\img_{\prg}$, respectively. 
\subsection{Backwards analysis}\label{sec:backward}
In this section, we assume a pre-image over-approxi\-mating backwards analysis of the program, \eg ~\cite{cousot2013}, \ie, we assume a function  $\pre\supsharp_{\prg}\colon \wp(Y) \rightarrow \wp(X)$ such that $\pre_{\prg}(A) \subseteq \pre\supsharp_{\prg}(A)$. We want to use $\pre_{\prg}\supsharp$ to provide upper and lower probability bounds for all output events. 
%
We start by defining an order between the pre-image functions based on the relationship of their outputs.
\begin{definition}
Let $\pre,\pre'\colon \wp(Y) \rightarrow \wp(X)$ be functions. 
We say that function $\pre'$ \emph{over-approximates} $\pre$, \ie{} $\pre \preceq \pre'$, 
and that $\pre$ \emph{under-approximates} $\pre'$, if $\pre(A) \subseteq \pre'(A)$
\end{definition}
\noindent The intention is to measure the over-approximated pre-images of each output event $A$ using the assumed input probability measure $\mu\colon \mathcal{X} \rightarrow [0,1]$, \ie,  requiring $\pre\supsharp(A) \in \mathcal{X}$. 
However, this is not always the case, as shown in the following example.
\begin{example}[Examples \ref{ex:f:total+measurable},\ref{ex:f:outputmeasure} continued]\label{ex:f:presharp}
%
An example of a backwards analysis $\pre\supsharp_f$ that over-approxi\-mates $\pre_f$ could be defined so that $\pre\supsharp_f(\{d\}) =\{b\}$ and $\pre\supsharp_f(\{c\}) =\{a,b\}$. Here, $\{b\}= \pre\supsharp_f(\{d\})\supseteq \pre_f(\{d\}) =\emptyset$ as required, however, $\pre\supsharp_f(\{d\})$ is not measurable in the input space, \ie $\!\{b\} \notin \{\emptyset,\{a,b\}\!\}$. 
\end{example}
\noindent For these cases, we define a function  $\uparrow$ that further over-approximates the pre-images.
\begin{definition}
Let $(X,\mathcal{X})$ be a measurable space.
 A function $\uparrow\colon \wp(X) \rightarrow \mathcal{X}$ is an \emph{abstraction}
 if $A \subseteq\; \abs{A}$.
\end{definition}
\noindent An abstraction  $\uparrow\colon \wp(X) \rightarrow \mathcal{X}$ can always be defined using a mapping $f\colon X \rightarrow \mathcal{X}$, where $x \in f(x)$, \ie{} $\abs{B} \triangleq \bigcup_{b\in B}f(b)$. 
The composition of a pre-image over-approximation $\pre\supsharp$ and an abstraction $\uparrow$ over-approximates the pre-image and produces measurable input events.  
\begin{lemma}\label{chp7:lem:comp}\setlength\emergencystretch{.5\textwidth}
Let $\pre\colon \wp(Y) \rightarrow \wp(X)$ be a pre-image, let $\pre \preceq \pre\supsharp$, and let $\uparrow\colon \wp(X) \rightarrow \mathcal{X}$ be an abstraction; then,
$\pre \preceq \; \abs{\;\;\comp\pre\supsharp} \text{ and } \abs{(\pre\supsharp(A))} \in \mathcal{X}.$
\end{lemma}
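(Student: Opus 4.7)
The plan is to prove both claims by directly unfolding the definitions of $\preceq$ and of an abstraction, chaining the two inclusions by transitivity of $\subseteq$, and observing that the codomain of $\uparrow$ is $\mathcal{X}$ by definition.

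First, I would fix an arbitrary output event $A \in \wp(Y)$ and work pointwise. By the hypothesis $\pre \preceq \pre\supsharp$, we have $\pre(A) \subseteq \pre\supsharp(A)$. Now let $B \triangleq \pre\supsharp(A) \in \wp(X)$. Since $\uparrow\colon \wp(X) \rightarrow \mathcal{X}$ is an abstraction, the defining property $B \subseteq\; \abs{B}$ gives $\pre\supsharp(A) \subseteq\; \abs{\pre\supsharp(A)}$. Combining the two inclusions by transitivity yields $\pre(A) \subseteq\; \abs{\pre\supsharp(A)} = (\abs{\;\;\comp\pre\supsharp})(A)$. Since $A$ was arbitrary, this is exactly the statement $\pre \preceq\; \abs{\;\;\comp\pre\supsharp}$.

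For the second claim, membership $\abs{\pre\supsharp(A)} \in \mathcal{X}$ is immediate from the typing of $\uparrow$: the abstraction is declared to have codomain $\mathcal{X}$, so every image under $\uparrow$ is an element of the $\sigma$-algebra, and in particular $\abs{(\pre\supsharp(A))} \in \mathcal{X}$ for every $A \in \wp(Y)$.

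There is no real obstacle here; the lemma is a bookkeeping statement whose content is exactly that composing with $\uparrow$ preserves over-approximation and lands in $\mathcal{X}$. The only thing to take care of is to phrase the two inclusions in the right order (first apply $\pre \preceq \pre\supsharp$, then apply the abstraction property to $\pre\supsharp(A)$, not to $\pre(A)$), and to note explicitly that $\abs{\cdot}$ is defined on all of $\wp(X)$ so it can in particular be applied to the possibly non-measurable set $\pre\supsharp(A)$, as illustrated by Example~\ref{ex:f:presharp}.
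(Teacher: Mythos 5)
Your proof is correct; the paper itself states this lemma without proof, treating it as immediate from the definitions, and your argument (chaining $\pre(A) \subseteq \pre\supsharp(A) \subseteq\; \abs{(\pre\supsharp(A))}$ by transitivity and reading off membership in $\mathcal{X}$ from the codomain of $\uparrow$) is exactly the intended justification. Your remark that $\uparrow$ must be applied to $\pre\supsharp(A)$ rather than $\pre(A)$, and that it is defined on all of $\wp(X)$, is a sensible point of care.
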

 %
We can now present the first result, namely the definition of an upper probability bound.
\begin{theorem}\label{chp7:thm:back-up}
Let $f\colon (X,\mathcal{X}) \rightarrow (Y,\mathcal{Y})$ be a measurable function, let $(X,\mathcal{X},\mu)$ be an input probability space, and  let $\mu_f\colon \mathcal{Y} \rightarrow [0,1]$ be the output probability measure. 
Furthermore, let 
 $\pre_f\supsharp\colon \wp(Y) \rightarrow \wp(X)$ over-approximate $\pre_f$ and let 
 $\uparrow\colon \wp(X) \rightarrow \mathcal{X}$ be an abstraction. 
 We define the \emph{upper probability bound of $\mu_f$} as
  $\mu_f\supsharp \triangleq \mu \comp \uparrow \comp\!\! \pre\supsharp_f$.
Then,
$ \mu_f(A) \leq \mu_f\supsharp(A)$ and 
when $\pre_f\supsharp$ and $\uparrow$ are monotonic, then $\mu_f\supsharp$ is monotonic.
%
\end{theorem}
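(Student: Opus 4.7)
The plan is to handle the two assertions one at a time, since each is a short consequence of monotonicity of $\mu$ combined with Lemma~\ref{chp7:lem:comp}.

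For the inequality $\mu_f(A) \leq \mu_f\supsharp(A)$, I would first unfold the definition to get $\mu_f(A) = \mu(\pre_f(A))$ and $\mu_f\supsharp(A) = \mu(\uparrow(\pre\supsharp_f(A)))$. Then I would invoke Lemma~\ref{chp7:lem:comp} twice: once to conclude that $\pre_f(A) \subseteq \uparrow(\pre\supsharp_f(A))$ (from $\pre_f \preceq \uparrow \comp \pre\supsharp_f$), and once to conclude that $\uparrow(\pre\supsharp_f(A)) \in \mathcal{X}$, so the right-hand side is a legitimate argument for $\mu$. Since $f$ is measurable, $\pre_f(A) \in \mathcal{X}$ as well, and both quantities live in $\mathcal{X}$. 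The monotonicity of $\mu$ recorded in the preliminaries then yields $\mu(\pre_f(A)) \leq \mu(\uparrow(\pre\supsharp_f(A)))$, which is exactly the claim.

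For the monotonicity of $\mu_f\supsharp$, suppose $A \subseteq B$ with $A,B \in \mathcal{Y}$. I would chain three monotonicities in order: monotonicity of $\pre\supsharp_f$ gives $\pre\supsharp_f(A) \subseteq \pre\supsharp_f(B)$; monotonicity of $\uparrow$ then gives $\uparrow(\pre\supsharp_f(A)) \subseteq \uparrow(\pre\supsharp_f(B))$; and since both images lie in $\mathcal{X}$ by Lemma~\ref{chp7:lem:comp}, monotonicity of $\mu$ gives $\mu(\uparrow(\pre\supsharp_f(A))) \leq \mu(\uparrow(\pre\supsharp_f(B)))$, i.e., $\mu_f\supsharp(A) \leq \mu_f\supsharp(B)$.

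There is no real obstacle here; the proof is essentially bookkeeping. The one subtle point to state explicitly is that Lemma~\ref{chp7:lem:comp} is doing two jobs simultaneously — it both produces the inclusion needed for the inequality and guarantees that the over-approximation lands inside $\mathcal{X}$, without which the outer application of $\mu$ would not even be well-defined.
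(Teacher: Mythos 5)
Your proposal is correct and follows essentially the same route as the paper: Lemma~\ref{chp7:lem:comp} supplies both the inclusion $\pre_f \preceq \uparrow \comp \pre\supsharp_f$ and the measurability of $\uparrow(\pre\supsharp_f(A))$, monotonicity of $\mu$ gives the inequality, and the monotonicity claim follows by composing the three monotone maps. Your explicit remark that measurability is needed for $\mu$ to even be applicable is a point the paper leaves implicit, but it is not a different argument.
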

\begin{proof}
Lemma~\ref{chp7:lem:comp} yield that $\abs{(\pre\supsharp(A))}\in \mathcal{X}$ and $\pre_f \preceq \;\abs{\,\;\comp\pre\supsharp_f}$. Furthermore, by the monotonicity of $\mu$, we obtain that for any $A \in \mathcal{Y}$ then $\mu(\pre_f(A)) \leq \mu(\abs{\;\;\comp\pre\supsharp_f}(A)$ and thus, $\mu_f(A) \leq \mu_f\supsharp(A)$. 
Finally, when $\uparrow$ and $\pre_f\supsharp$ are monotonic then by composition $\mu \!\comp \uparrow \comp\!\! \pre\supsharp_f$ is monotonic, \ie, $\mu_f\supsharp$ is monotonic, since $\mu$ is monotonic by definition. 
\end{proof}
\begin{example}[Examples \ref{ex:f:total+measurable}–-\ref{ex:f:presharp} continued]
\label{ex:f:presharpabs}
To obtain measurable input events, we create an abstraction $\uparrow_{f}\colon \wp(\{a,b\}) \rightarrow \{\emptyset, \{a,b,\}\}$ defined so that $\uparrow_{f}(\{a\})=\uparrow_{f}(\{b\})= \{a,b\}$ and $\uparrow_{f}(\emptyset)=\uparrow_{f}(\{a,b\}) =id$. According to Theorem~\ref{chp7:thm:back-up},  $\mu_f\supsharp = \mu \comp \uparrow_{f} \comp \pre\supsharp_f$ provides upper probability bounds for the output events as follows --their exact probabilities $\mu_f$ are provided for comparison: 
$$\begin{array}{llll@{\qquad\qquad\big(\;}l@{\;\big)}}
\mu_f\supsharp(\emptyset) &= \mu(\uparrow_{f}(\pre\supsharp_f(\emptyset))) &= \mu(\emptyset) &= 0 & \geq 0 =\mu_f(\emptyset) \\ 
\mu_f\supsharp(\{c\}) &= \mu(\uparrow_{f}(\pre\supsharp_f(\{c\}))) &= \mu(\{a,b\}) &= 1 & \geq 1 =\mu_f(\{c\}) \\
\mu_f\supsharp(\{d\}) &= \mu(\uparrow_{f}(\pre\supsharp_f(\{d\}))) &= \mu(\{a,b\}) &= 1 & \geq 0 =\mu_f(\{d\}) \\
\mu_f\supsharp(\{c,d\}) &= \mu(\uparrow_{f}(\pre\supsharp_f(\{d\}))) &= \mu(\{a,b\}) &= 1 & \geq 1 =\mu_f(\{c,d\})
\end{array}$$
\end{example} 
Based on over-approximating pre-images we also want to derive lower probability bounds; to achieve this we will define under-approximating pre-images and we start by introducing the concept of a dual function.
\begin{definition}
Let $f, \stcomp{f}$ be functions $f,\stcomp{f}\colon \wp(Y) \rightarrow \wp(X)$. $\stcomp{f}$ is \emph{dual of $f$} if $\stcomp{f}(A) = \stcompc{f(\stcompc{A})}$.
%
\end{definition}
\noindent Because $\prg$ is total, we can use the \emph{dual} of $\pre\supsharp$ to define a function $\pre\supflat$ that under-approximates $\pre$, as shown by the following lemma.
\begin{lemma}\label{chp7:lem:dualunderapproxpre}
Let $\pre_f,\pre_f\supsharp\!\colon\! \wp(Y) \rightarrow \wp(X)$ be functions with $\pre_f$ as the pre-image of a total and measurable function $f\!\colon\! X \rightarrow Y$ and $\pre_f \! \preceq \!\pre_f\supsharp$. Then, the dual $\pre_f\supflat \triangleq \stcomp{\pre_f\supsharp}$ under-approximates $\pre_f$, \ie, $\pre_f\supflat \preceq \pre_f.$
\end{lemma}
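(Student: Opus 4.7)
The plan is to unfold the definition of the dual and then exploit the fact that for a total function the pre-image commutes with complementation. First I would rewrite the goal using the definition: for any $A \in \wp(Y)$,
\[
\pre_f\supflat(A) \;=\; \stcomp{\pre_f\supsharp}(A) \;=\; \stcompc{\pre_f\supsharp(\stcompc{A})},
\]
so the task reduces to showing $\stcompc{\pre_f\supsharp(\stcompc{A})} \subseteq \pre_f(A)$.

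Next I would establish the key auxiliary fact that $\pre_f(\stcompc{A}) = \stcompc{\pre_f(A)}$. This is where totality of $f$ is essential: because every $x \in X$ is mapped to exactly one element of $Y$, we have $x \in \pre_f(A) \iff f(x) \in A \iff f(x) \notin \stcompc{A} \iff x \notin \pre_f(\stcompc{A})$, giving both inclusions. (Without totality, the forward direction would fail, since an $x$ with $f(x)$ undefined would lie in neither pre-image.)

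Then I would apply the over-approximation hypothesis $\pre_f \preceq \pre_f\supsharp$ to the event $\stcompc{A}$ to obtain $\pre_f(\stcompc{A}) \subseteq \pre_f\supsharp(\stcompc{A})$. Substituting the auxiliary identity on the left yields $\stcompc{\pre_f(A)} \subseteq \pre_f\supsharp(\stcompc{A})$. Taking complements on both sides reverses the inclusion, giving $\stcompc{\pre_f\supsharp(\stcompc{A})} \subseteq \pre_f(A)$, which is exactly $\pre_f\supflat(A) \subseteq \pre_f(A)$ as required.

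The only real subtlety is justifying the commutation of $\pre_f$ with complement; once that is in hand, the argument is a two-line manipulation of the over-approximation inequality. I do not anticipate a genuine obstacle: totality is assumed explicitly in the statement, so the lemma follows immediately after that observation is made.
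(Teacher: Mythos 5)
Your proof is correct and follows essentially the same route as the paper's: both hinge on totality (the paper phrases it as $\pre_f(A) \cup \pre_f(\stcompc{A}) = X$, you as the commutation $\pre_f(\stcompc{A}) = \stcompc{\pre_f(A)}$, which is the same observation) and then apply the over-approximation hypothesis to $\stcompc{A}$ before complementing back. The difference is purely presentational, so nothing further is needed.
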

\begin{proof}
Let $A \in \mathcal{Y}$. Since $f$ is total, $\pre_f(A) \cup \pre_f(\stcompc{A}) = X$. Thus, 
$\pre_f\supflat(A) = \stcomp{\pre_f\supsharp} = \stcompc{\pre_f\supsharp(\stcompc{A})}  = \break X\! \setminus \!\pre_f\supsharp(\stcompc{A})  =  \left(\pre_f(A) \cup \pre_f(\stcompc{A}) \right)\!\setminus\! \pre_f\supsharp(\stcompc{A}) 
= \pre_f(A) \setminus \pre_f\supsharp(\stcompc{A}) \subseteq \pre_f(A).
$
\end{proof}
\begin{lemma}
\label{chp7:prop:prebismonotone}
If $\pre\supsharp_f$ is monotone, then ${pre}\supflat_f$ is monotone.
\end{lemma}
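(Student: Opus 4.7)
The plan is to unfold the definition of $\pre\supflat_f$ as the dual $\stcomp{\pre\supsharp_f}$ and then track how set inclusion behaves under two applications of complementation: since complementation reverses inclusion, doing it twice preserves inclusion, and the monotonicity of $\pre\supsharp_f$ sandwiched in between carries the inclusion through.

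Concretely, I would start by fixing $A, B \in \wp(Y)$ with $A \subseteq B$, and aim to show $\pre\supflat_f(A) \subseteq \pre\supflat_f(B)$. The first step is to note that $A \subseteq B$ implies $\stcompc{B} \subseteq \stcompc{A}$ by elementary properties of complementation in $Y$. Next, applying the assumed monotonicity of $\pre\supsharp_f$ to the inclusion $\stcompc{B} \subseteq \stcompc{A}$ gives $\pre\supsharp_f(\stcompc{B}) \subseteq \pre\supsharp_f(\stcompc{A})$ as subsets of $X$. Taking complements in $X$ reverses this inclusion once more, yielding $\stcompc{\pre\supsharp_f(\stcompc{A})} \subseteq \stcompc{\pre\supsharp_f(\stcompc{B})}$. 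Finally, by the definitions $\pre\supflat_f(A) = \stcomp{\pre\supsharp_f}(A) = \stcompc{\pre\supsharp_f(\stcompc{A})}$ and $\pre\supflat_f(B) = \stcompc{\pre\supsharp_f(\stcompc{B})}$, this is exactly $\pre\supflat_f(A) \subseteq \pre\supflat_f(B)$, which establishes monotonicity of $\pre\supflat_f$.

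There is essentially no obstacle here; the statement is a general fact that the dual of a monotone function with respect to set complementation is monotone, and the proof is a three-line inclusion chase. The only thing worth being careful about is making sure the complements are taken in the right ambient sets (complements of elements of $\wp(Y)$ are in $Y$, whereas complements of pre-images are in $X$), but this is purely bookkeeping and uses none of the measurability or totality assumptions from earlier lemmas.
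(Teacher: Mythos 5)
Your proof is correct, and it takes a genuinely different --- and in fact more economical --- route than the paper's. You argue directly from the definition of the dual, $\pre\supflat_f(A) = \stcompc{\pre\supsharp_f(\stcompc{A})}$, observing that the composition of two inclusion-reversing operations (complementation in $Y$, then complementation in $X$) with the monotone $\pre\supsharp_f$ sandwiched in between is inclusion-preserving; this uses nothing beyond the monotonicity hypothesis and so establishes the general fact that the dual of any monotone function is monotone. The paper instead begins by rewriting $\pre\supflat_f(B)$ as the set difference $\pre\supsharp_f(B)\setminus\pre\supsharp_f(\stcompc{B})$ --- an identity that holds only because $f$ is total and $\pre_f \preceq \pre\supsharp_f$, which together guarantee $\pre\supsharp_f(B)\cup\pre\supsharp_f(\stcompc{B})=X$ --- and then chases inclusions through the difference via De Morgan and two applications of the monotonicity of $\pre\supsharp_f$. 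Your argument is shorter, avoids the implicit appeal to totality and over-approximation (neither of which appears among the lemma's stated hypotheses), and your closing remark about tracking the ambient set of each complement is exactly the right bookkeeping point.
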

\begin{proof}\label{chp7:prop:prebismonotone:app}
Assume $A,B \subseteq X$, where $A \subseteq B$ and define $C = (B \setminus A)$; then, 
${pre}\supflat_f(B) 
= {pre}\supflat_f(A \uplus C)
= \pre\supsharp_f(A \uplus C) \setminus \pre\supsharp_f(\stcompc{(A \uplus C)})
= \pre\supsharp_f(A \uplus C) \setminus \pre\supsharp_f(\stcompc{A} \cap \stcompc{C}))
\supseteq \pre\supsharp_f(A \uplus C) \setminus \pre\supsharp_f(\stcompc{A})
\supseteq \pre\supsharp_f(A) \setminus \pre\supsharp_f(\stcompc{A})= {pre}\supflat_f(A)
$
\end{proof}
\noindent When the over-approximated pre-images of the output events are measurable in the input measure space, their dual  under-approximated pre-images are also measurable, as the following lemma states.
\begin{restatable}{lemma}{lemdualmeasurable}
\label{chp7:lem:dualmeasurable}
Let $f\!\colon\! (X, \mathcal{X}) \!\rightarrow \!(Y,\mathcal{Y})$ be a measurable function, 
$\pre_f\supsharp\!\colon \wp(X) \rightarrow \wp(Y)$ be a function where $\pre_f \preceq \pre_f\supsharp$, and $\pre_{f}\supflat$ be dual to $\pre_{f}\supsharp$. Then, for all $A \!\in\! \mathcal{Y}$, 
$\pre_{f}\supsharp(A) \!\in\! \mathcal{X} \text{ if and only if }\pre_{f}\!\supflat(A)\!\in\! \mathcal{X}$
\end{restatable}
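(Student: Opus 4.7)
My plan is to reduce the stated pointwise biconditional to a claim purely about $\pre_f\supsharp$ using the definition of dual and closure of $\mathcal{X}$ under complement, and then to attack the resulting symmetry via the totality identity developed inside the proof of Lemma~\ref{chp7:lem:dualunderapproxpre}, crucially using both $\pre_f \preceq \pre_f\supsharp$ and measurability of $f$.

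First, by the definition of dual, $\pre_f\supflat(A) = \stcompc{\pre_f\supsharp(\stcompc{A})}$. Since $\mathcal{X}$ is a $\sigma$-algebra, it is closed under complement, which yields the auxiliary equivalence $\pre_f\supflat(A) \in \mathcal{X} \iff \pre_f\supsharp(\stcompc{A}) \in \mathcal{X}$. The target biconditional therefore reduces, for each fixed $A \in \mathcal{Y}$, to
\[
\pre_f\supsharp(A) \in \mathcal{X} \iff \pre_f\supsharp(\stcompc{A}) \in \mathcal{X},
\]
an equivalence about the analysis evaluated at an event and its complement.

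Next, I will invoke the totality identity $\pre_f\supflat(B) = \pre_f(B) \setminus \pre_f\supsharp(\stcompc{B})$ for $B \in \mathcal{Y}$, derived inside the proof of Lemma~\ref{chp7:lem:dualunderapproxpre} from totality of $f$ together with $\pre_f \preceq \pre_f\supsharp$. Measurability of $f$ puts $\pre_f(A)$ and $\pre_f(\stcompc{A})$ in $\mathcal{X}$, and totality gives the disjoint partition $X = \pre_f(A) \uplus \pre_f(\stcompc{A})$. Combined with $\pre_f \preceq \pre_f\supsharp$ this yields the disjoint decompositions $\pre_f\supsharp(A) = \pre_f(A) \uplus S_A$ and $\pre_f\supsharp(\stcompc{A}) = \pre_f(\stcompc{A}) \uplus S_{\stcompc{A}}$, where $S_A := \pre_f\supsharp(A) \cap \pre_f(\stcompc{A})$ and $S_{\stcompc{A}} := \pre_f\supsharp(\stcompc{A}) \cap \pre_f(A)$. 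Since $\pre_f(A), \pre_f(\stcompc{A}) \in \mathcal{X}$, this gives $\pre_f\supsharp(A) \in \mathcal{X} \iff S_A \in \mathcal{X}$ and $\pre_f\supsharp(\stcompc{A}) \in \mathcal{X} \iff S_{\stcompc{A}} \in \mathcal{X}$, and rearranging the totality identity shows $\pre_f\supflat(A) = \pre_f(A) \setminus S_{\stcompc{A}}$, so $\pre_f\supflat(A) \in \mathcal{X} \iff S_{\stcompc{A}} \in \mathcal{X}$.

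The main obstacle is the final step: showing $S_A \in \mathcal{X} \iff S_{\stcompc{A}} \in \mathcal{X}$. These excess pieces live in the disjoint measurable halves $\pre_f(\stcompc{A})$ and $\pre_f(A)$, and neither totality nor $\pre_f \preceq \pre_f\supsharp$ immediately ties their measurabilities together. My intended route is to chain through the auxiliary equivalence: $S_A \in \mathcal{X}$ gives $\pre_f\supflat(\stcompc{A}) = \pre_f(\stcompc{A}) \setminus S_A \in \mathcal{X}$, and the auxiliary equivalence (applied at $\stcompc{A}$) then returns $\pre_f\supsharp(A) \in \mathcal{X}$, closing one half of the circle. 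I expect the crux of the restated lemma to lie in converting this circular chain into a genuine cross-implication $S_A \in \mathcal{X} \Rightarrow S_{\stcompc{A}} \in \mathcal{X}$; resolving this will require leveraging the hypothesis $\pre_f \preceq \pre_f\supsharp$ in a way that forces the two excess pieces to be measurable simultaneously, and this is the step where I anticipate the argument to be most delicate.
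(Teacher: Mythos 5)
Your first paragraph already contains the paper's entire proof, and everything after it is an attempt to prove something stronger that is in fact false. The equivalence you call ``auxiliary''---$\pre_f\supflat(A) \in \mathcal{X} \iff \pre_f\supsharp(\stcompc{A}) \in \mathcal{X}$, from the definition of dual plus closure of $\mathcal{X}$ under complements---is exactly what the paper proves, in both directions, and nothing more. The difference is the placement of the quantifier: the paper reads the hypothesis of each direction as quantified over all output events (see the sentence introducing the lemma: \emph{``when the over-approximated pre-images of the output events are measurable \ldots\ their dual under-approximated pre-images are also measurable''}). In the ``$\Rightarrow$'' direction it may therefore apply the hypothesis at the event $\stcompc{A} \in \mathcal{Y}$, obtaining $\pre_f\supsharp(\stcompc{A}) \in \mathcal{X}$, and complement-closure gives $\pre_f\supflat(A) = \stcompc{\pre_f\supsharp(\stcompc{A})} \in \mathcal{X}$; the ``$\Leftarrow$'' direction is symmetric. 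Under this reading neither totality, nor $\pre_f \preceq \pre_f\supsharp$, nor measurability of $f$ is needed at all.

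The same-$A$ pointwise biconditional you then set out to prove---equivalently $\pre_f\supsharp(A) \in \mathcal{X} \iff \pre_f\supsharp(\stcompc{A}) \in \mathcal{X}$ for each fixed $A$, which you correctly reduce to $S_A \in \mathcal{X} \iff S_{\stcompc{A}} \in \mathcal{X}$---is false, so the ``delicate step'' you postpone cannot be carried out by any use of the hypotheses. The paper's own running example is a counterexample: take $X=\{a,b\}$, $\mathcal{X}=\{\emptyset,X\}$, $Y=\{c,d\}$, $\mathcal{Y}=\wp(Y)$, $f(a)=f(b)=c$ as in Example~\ref{ex:f:total+measurable}, and the analysis of Example~\ref{ex:f:presharp}, $\pre_f\supsharp(\{d\})=\{b\}$, $\pre_f\supsharp(\{c\})=\{a,b\}$, extended by $\pre_f\supsharp(\emptyset)=\emptyset$ and $\pre_f\supsharp(Y)=X$ so that $\pre_f \preceq \pre_f\supsharp$ holds everywhere. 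Here $f$ is total and measurable, yet for $A=\{c\}$ one has $\pre_f\supsharp(A)=X \in \mathcal{X}$ while $\pre_f\supflat(A)=\stcompc{\pre_f\supsharp(\{d\})}=\{a\} \notin \mathcal{X}$. In your notation, $S_A = \pre_f\supsharp(\{c\}) \cap \pre_f(\{d\}) = \emptyset \in \mathcal{X}$ while $S_{\stcompc{A}} = \pre_f\supsharp(\{d\}) \cap \pre_f(\{c\}) = \{b\} \notin \mathcal{X}$, so the cross-implication you hoped to extract fails outright. Your decompositions up to that point are all correct; the obstruction is not a missing trick but the reading of the statement. Read with the quantifier governing each side of the biconditional separately (the only reading under which the lemma is true, and the one under which it is used in Theorem~\ref{chp7:thm:back-down}), your first paragraph is already a complete proof.
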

\begin{proof}\label{lem:dualmeasurable:app}
Let $A \in \mathcal{Y}$. The following are consequences of $\sigma$-algebras being closed under complements, of  the duality of $\pre_{f}\supflat$ and $\pre\supsharp$, and of the assumed measurability of $A$.  \\ 
``$\Rightarrow$'': 
$A\in \mathcal{Y} \Rightarrow 
\stcompc{A} \in \mathcal{Y} \Rightarrow 
\pre\supsharp_f(\stcompc{A}) \in \mathcal{X} \Rightarrow 
\stcompc{\pre\supsharp_f(\stcompc{A})} \in \mathcal{X} \Rightarrow 
 {pre}\supflat_f(A) \in \mathcal{X}
$ \\
``$\Leftarrow$'': 
$ A \in \mathcal{Y }\Rightarrow 
\stcompc{A} \in \mathcal{Y}\Rightarrow 
 {pre}\supflat_f(\stcompc{A}) \in \mathcal{X} \Rightarrow 
 \stcompc{\pre\supsharp_f(\stcompc{\stcompc{A}})} \in \mathcal{X} \Rightarrow 
 \stcompc{\pre\supsharp_f(A)} \in \mathcal{X} \Rightarrow 
  \pre\supsharp_f(A) \in \mathcal{X} 
 $
\end{proof}
\noindent
We can now define a lower probability bound which is directly related to the upper probability bound.
\begin{theorem}\label{chp7:thm:back-down}
Let $f\colon (X,\mathcal{X}) \rightarrow (Y,\mathcal{Y})$ be a measurable function, let $(X,\mathcal{X},\mu)$ be an input probability space, and  let $\mu_f\colon \mathcal{Y} \rightarrow [0,1]$ be the output probability measure. 
Furthermore, let 
 $\pre_f\supsharp\colon \wp(Y) \rightarrow \wp(X)$ over-approximate $\pre_f$ and let 
 $\uparrow\colon \wp(X) \rightarrow \mathcal{X}$ be an abstraction. 
We let ${\pre'}\supflat_f(A) \triangleq \stcompc{\big(\uparrow \comp\!\! \pre\supsharp_f(\stcompc{A})\big)}$ and define the \emph{lower probability bound of $\mu_f$} as
  $\mu\supflat \triangleq \mu \comp {\pre'}\supflat_f$.
  Then,
$  \mu_f\supflat(A) \leq \mu_f(A)$ and $\mu_f\supflat(A) = 1 - \mu_f\supsharp(\stcompc{A})$.
Furthermore, if $\pre_f\supsharp$ and $\uparrow$ are monotonic, then $\mu_f\supflat$ is monotonic.
\end{theorem}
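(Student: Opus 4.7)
The plan is to observe that $\uparrow\comp\pre_f\supsharp$ is itself a measurable-valued pre-image over-approximation of $\pre_f$ by Lemma~\ref{chp7:lem:comp}, and that ${\pre'}\supflat_f$ is precisely the dual of $\uparrow\comp\pre_f\supsharp$ in the sense of the definition preceding Lemma~\ref{chp7:lem:dualunderapproxpre}. All three claims should then fall out of that lemma, Lemma~\ref{chp7:lem:dualmeasurable}, and the basic identity $\mu(\stcompc{B}) = 1 - \mu(B)$ valid for probability measures.

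For the inequality $\mu_f\supflat(A) \leq \mu_f(A)$, I would apply Lemma~\ref{chp7:lem:dualunderapproxpre} (using totality of $f$) to the over-approximation $\uparrow\comp\pre_f\supsharp$ to obtain ${\pre'}\supflat_f(A) \subseteq \pre_f(A)$; Lemma~\ref{chp7:lem:dualmeasurable} then places ${\pre'}\supflat_f(A)$ in $\mathcal{X}$, and monotonicity of $\mu$ closes the argument. For the identity $\mu_f\supflat(A) = 1 - \mu_f\supsharp(\stcompc{A})$, I would simply unfold the definition of ${\pre'}\supflat_f$ and apply $\mu(\stcompc{B}) = 1 - \mu(B)$ with $B = \uparrow\comp\pre_f\supsharp(\stcompc{A})$; the step is legitimate because $B \in \mathcal{X}$ by Lemma~\ref{chp7:lem:comp}, and it rewrites $\mu_f\supflat(A) = \mu(\stcompc{B}) = 1 - \mu(B) = 1 - \mu_f\supsharp(\stcompc{A})$.

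For monotonicity, I would take $A \subseteq B$ in $\mathcal{Y}$, pass to complements to get $\stcompc{B} \subseteq \stcompc{A}$, push this inclusion forward through $\pre_f\supsharp$ and $\uparrow$ using their assumed monotonicity, and take one final complement to conclude ${\pre'}\supflat_f(A) \subseteq {\pre'}\supflat_f(B)$; monotonicity of $\mu$ then yields $\mu_f\supflat(A) \leq \mu_f\supflat(B)$. Alternatively, one could derive monotonicity directly from the identity already proved: if $A \subseteq B$ then $\stcompc{B} \subseteq \stcompc{A}$, so by Theorem~\ref{chp7:thm:back-up} the monotone $\mu_f\supsharp$ gives $\mu_f\supsharp(\stcompc{B}) \leq \mu_f\supsharp(\stcompc{A})$, and subtracting from $1$ flips the inequality into $\mu_f\supflat(A) \leq \mu_f\supflat(B)$. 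The only real obstacle is the bookkeeping of which sets live in $\wp(X)$ versus $\mathcal{X}$ at each step, so that every $\mu$-evaluation is well-defined; once ${\pre'}\supflat_f$ has been recognized as the dual of $\uparrow\comp\pre_f\supsharp$, no essentially new technique is required beyond the earlier lemmas.
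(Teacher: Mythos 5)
Your proposal is correct and follows essentially the same route as the paper: recognize ${\pre'}\supflat_f$ as the dual of the measurable-valued over-approximation $\uparrow\comp\pre_f\supsharp$, then use Lemmas~\ref{chp7:lem:comp}, \ref{chp7:lem:dualunderapproxpre} and \ref{chp7:lem:dualmeasurable} for the inequality, and unfold the definition together with $\mu(\stcompc{B})=1-\mu(B)$ for the identity. The only cosmetic difference is in the monotonicity part, where you argue directly by two complementations (or via the already-proved identity) while the paper cites Lemma~\ref{chp7:prop:prebismonotone}; both arguments are valid.
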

\begin{proof}\setlength\emergencystretch{.1\textwidth}
By Lemmas~\ref{chp7:lem:comp} and~\ref{chp7:lem:dualunderapproxpre}, ${\pre'}\supflat_f \preceq \pre_f$, and by~\ref{chp7:lem:comp} and~\ref{chp7:lem:dualmeasurable}, ${\pre'}\supflat_f(A) \in \mathcal{X}$.
Furthermore, by the monotonicity of $\mu$, we obtain $\mu({\pre'}\supflat_f(A)) \leq \mu(\pre_f(A))$ and, thus, $\mu_f\supflat(A) \leq \mu_f(A) \leq \mu_f\supsharp(A)$. 
We obtain the second part by $\mu(A) = 1-\mu(\stcompc{A})$ and the definitions of $\mu_f\supflat$ and $\mu_f\supsharp$, that is,
$\mu_f\supflat(A) = \mu({\pre'}\supflat_f(A)) = 1 - \mu(\stcompc{{\pre'}\supflat_f(A)}) = 
1 - \mu(\stcompc{\stcompc{\big(\uparrow \comp\!\! \pre\supsharp_f(\stcompc{A})\big)}}) = 1 - \mu(\uparrow \comp\!\! \pre\supsharp_f(\stcompc{A}))
= 1 -\mu_f\supsharp(\stcompc{A})$
%
Finally, since $\mu$, $\pre_f\supsharp$ and $\uparrow$ are monotonic, then by composition and Lemma~\ref{chp7:prop:prebismonotone}, $\mu \comp {\pre'}_f\supsharp$ and $\mu \comp {\pre'}_f\supflat$ are monotonic.
\end{proof}
\begin{example}[Examples \ref{ex:f:total+measurable}–-\ref{ex:f:presharpabs} continued]
Folowing Theorem~\ref{chp7:thm:back-down},
we define ${\pre'}\supflat_f(A) \triangleq \stcompc{\big(\uparrow \comp\!\! \pre\supsharp_f(\stcompc{A})\big)}$ such that $\mu\supflat \triangleq \mu \comp {\pre'}\supflat_f$
provides the following lower probability bounds for the output events: 
$$\begin{array}{llll@{\qquad\qquad\big(\;}l@{\;\big)}}
\mu_f\supflat(\emptyset) &= \mu(\stcompc{\uparrow_{f}(\pre\supsharp_f(\stcompc{\emptyset}))}) &= 
\mu(\stcompc{\{a,b\}}) &=  0 & \leq 0 =\mu_f(\emptyset) \\ 
\mu_f\supflat(\{c\}) &= \mu(\stcompc{\uparrow_{f}(\pre\supsharp_f(\stcompc{\{c\}}))}) &= 
\mu(\stcompc{\{a,b\}}) &= 0 & \leq 1 =\mu_f(\{c\}) \\ 
\mu_f\supflat(\{d\}) &= \mu(\stcompc{\uparrow_{f}(\pre\supsharp_f(\stcompc{\{d\}}))}) &= 
\mu(\stcompc{\{a,b\}}) &= 0 & \leq 0 =\mu_f(\{d\}) \\
\mu_f\supflat(\{c,d\}) &= \mu(\stcompc{\uparrow_{f}(\pre\supsharp_f(\stcompc{\{c,d\}}))}) &= 
\mu(\stcompc{\emptyset}) &= 1 & \leq 1 =\mu_f(\{c,d\}) \\
%
\end{array}$$
\end{example} 
To achieve the tightest probability bounds, the abstraction should return 
the least increased element in the $\sigma$-algebra. 
However, in general, such a least element does not exist. 
\begin{lemma}\label{chp7:lem:noleast}
Let $(X,\mathcal{X})$ be a measurable space, and let $A \in \wp(X)$; there does not always exist a least $B \in \mathcal{X}$ such that $A \subseteq B$.
\end{lemma}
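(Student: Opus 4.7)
The plan is to exhibit a counterexample: a measurable space $(X,\mathcal{X})$ together with a specific $A \in \wp(X)$ for which the family of measurable supersets of $A$ has no minimum element. In the finite case such a minimum always exists (just intersect the finitely many measurable supersets), and even in the countable case a minimum exists whenever there are only countably many candidates, since a $\sigma$-algebra is closed under countable intersections. So I must work in a setting where the collection of measurable supersets of $A$ is large and not closed under the relevant arbitrary intersections.

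First, I would fix $X$ to be any uncountable set, say $X = \mathbb{R}$, and take $\mathcal{X}$ to be the countable/co-countable $\sigma$-algebra, that is, $\mathcal{X} = \{B \subseteq X \mid B \text{ is countable or } \stcompc{B} \text{ is countable}\}$. Verifying that $\mathcal{X}$ is a $\sigma$-algebra is routine: it contains $\emptyset$ and $X$, is closed under complements by symmetry, and is closed under countable unions because a countable union of countable sets is countable, while if at least one member has countable complement the whole union does too. Next, I would choose $A \subseteq X$ uncountable with uncountable complement, e.g.\ $A = [0,1] \subseteq \mathbb{R}$; by construction $A \notin \mathcal{X}$.

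The key step is then the following observation: any $B \in \mathcal{X}$ with $A \subseteq B$ must be uncountable (since it contains the uncountable set $A$), hence it is co-countable, i.e.\ $B = X \setminus C$ for some countable $C$, and moreover $C \subseteq \stcompc{A}$. Now, because $\stcompc{A}$ is uncountable and $C$ is countable, there exists $x \in \stcompc{A} \setminus C$. Then $B' := B \setminus \{x\} = X \setminus (C \cup \{x\})$ is still co-countable, still contains $A$, and is a proper subset of $B$. Hence for every candidate minimum $B$ we produce a strictly smaller measurable superset of $A$, contradicting minimality, and the lemma follows.

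The only real obstacle is selecting a setting coarse enough that no measurable set can exactly cover $A$; once the countable/co-countable $\sigma$-algebra is in hand, the shrinking argument (remove one point of $\stcompc{A} \setminus C$) is immediate. A more topological alternative — $\mathbb{R}$ with the Borel $\sigma$-algebra and $A$ a non-Borel set — would also work, but would demand constructing a non-Borel set and arguing about Borel envelopes, so the countable/co-countable example is cleaner.
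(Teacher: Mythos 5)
Your proposal is correct and follows essentially the same route as the paper: both use the countable/co-countable $\sigma$-algebra on an uncountable $X$ with $A$ uncountable and co-uncountable, observe that any measurable superset must be co-countable, and then strictly shrink any candidate least superset by deleting a single point of $B \setminus A$. The only cosmetic difference is that you verify the $\sigma$-algebra axioms directly and phrase the shrinking via the countable complement $C$, whereas the paper phrases it via singletons being measurable; the argument is the same.
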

\begin{proof} \label{chp7:lem:noleast:app}
\emph{Proof by counterexample.}
A set $A \subseteq X$Â is \emph{co-countable} if $\stcompc{A}$ is countable. We define 
a $\sigma$-algebra $\mathcal{X}$ to be that generated by the collection of all countable and co-countable subsets of $X$. Note that since each singleton set is countable, they all exist in $\mathcal{X}$. Now, let $A \in \wp(X)$ be uncountable with an uncountable complement $\stcompc{A}$. We will show (by contradiction) that there is no least $B\in\mathcal{X}$ such that $A \subseteq B$. 
Assume that there is a least set $B\in\mathcal{X}$ that contains $A$. Then, $B$ would need to be uncountable, and according to the definition of $\mathcal{X}$, $\stcompc{B}$ would be countable. Since $\stcompc{B}$ is countable and $\stcompc{A}$ is uncountable,  $\stcompc{B} \subset \stcompc{A}$. This is equivalent to $A \subset B$, which causes $B\setminus A$ to contain at least one element; let that element be $x$.
Because $\{x\}$ is a singleton set, $\{x\} \in \mathcal{X}$, and because $\mathcal{X}$ is closed under countable intersection, $B \setminus \{x\} \in \mathcal{X}$. This implies that there is another set, namely, $B \setminus \{x\}$, such that $A \subseteq B \setminus \{x\} \subset B$, and thus $B$ is not the least set in $\mathcal{X}$ that contains $A$ - this contradicts our assumption.
\end{proof}
\noindent
When the $\sigma$-algebra is a complete lattice, such a least element does exist. For instance, a power set is both a complete lattice, \eg,~\cite[p.394]{Nielson1999}, and a $\sigma$-algebra~\cite[p.65]{Butler2018}. If the $\sigma$-algebra is a complete lattice, then the abstraction is the identity function, \ie{} $\uparrow = \textit{id}$.
\paragraph{Combining analyses}
Two black-box analyses may be combined into a tighter analyses. 
\begin{lemma}\label{chp7:lem:combinepre}
Let 
$\pre_f,\pre_f\supsharp, {\pre'}_f\supsharp \colon \wp(Y) \rightarrow \wp(X)$ be three  functions such that $\pre_f \preceq \pre_f\supsharp $ and $ \pre_f \preceq {\pre'}_f\supsharp$.
Then 
$\pre_f(A) \subseteq \pre_f\supsharp(A) \cap {\pre'}_f\supsharp(A).$
\end{lemma}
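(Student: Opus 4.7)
The plan is to unfold the definition of $\preceq$ twice and then apply the elementary set-theoretic fact that $C \subseteq A$ and $C \subseteq B$ together imply $C \subseteq A \cap B$. Concretely, I would fix an arbitrary $A \in \wp(Y)$ and invoke the hypothesis $\pre_f \preceq \pre_f\supsharp$ to obtain $\pre_f(A) \subseteq \pre_f\supsharp(A)$, then invoke $\pre_f \preceq {\pre'}_f\supsharp$ to obtain $\pre_f(A) \subseteq {\pre'}_f\supsharp(A)$.

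From these two inclusions the conclusion $\pre_f(A) \subseteq \pre_f\supsharp(A) \cap {\pre'}_f\supsharp(A)$ follows immediately: any $x \in \pre_f(A)$ lies in both $\pre_f\supsharp(A)$ and ${\pre'}_f\supsharp(A)$, hence in their intersection. Since $A$ was arbitrary, this establishes the lemma.

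There is no real obstacle here; the statement is essentially the definition of the greatest lower bound with respect to set inclusion applied to two over-approximations, and the proof consists of one line of bookkeeping. The only thing worth remarking on — and which motivates stating the lemma at all — is that the intersection $\pre_f\supsharp \cap {\pre'}_f\supsharp$ (taken pointwise in $A$) is itself an over-approximation of $\pre_f$, so it can be substituted for $\pre\supsharp_f$ in Theorems~\ref{chp7:thm:back-up} and~\ref{chp7:thm:back-down} to yield tighter upper and (by the duality of Lemma~\ref{chp7:lem:dualunderapproxpre}) lower probability bounds.
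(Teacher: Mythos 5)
Your proof is correct and is exactly the intended argument: the paper states this lemma without proof precisely because it reduces to the elementary fact that a set contained in each of $\pre_f\supsharp(A)$ and ${\pre'}_f\supsharp(A)$ is contained in their intersection, which is all your unfolding of $\preceq$ establishes. Your closing remark about substituting the pointwise intersection into Theorems~\ref{chp7:thm:back-up} and~\ref{chp7:thm:back-down} also matches how the paper uses the lemma in its ``Combining analyses'' paragraphs.
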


\subsection{Forward analysis}\label{sec:forward} 
Again, let $\prg\colon X \rightarrow Y$ be a function, and recall that $\img_{\raisebox{0.9ex}{$\scriptscriptstyle\prg$}}(A) \triangleq \{{\prg}(x) \mid x \in A\}$.
In this section, we present a method for computing upper and lower probability bounds for output events provided a probability measure $\mu\colon \mathcal{X} \rightarrow [0,1]$ over the input $X$ and a forward analysis, that is, a computable over-approximation $\img\supsharp_{\raisebox{0.9ex}{$\scriptscriptstyle\prg$}}\colon \wp(X) \rightarrow \wp(Y)$ of the image-function $\img_{\raisebox{0.9ex}{$\scriptscriptstyle\prg$}}\colon \wp(X) \rightarrow \wp(Y)$, \ie $\img_{\raisebox{0.9ex}{$\scriptscriptstyle\prg$}} \preceq \img\supsharp_{\raisebox{0.9ex}{$\scriptscriptstyle\prg$}}$.
To compute the probability of the events, we only need to define a computable pre-image over-approximating function $\pre_{\raisebox{0.9ex}{$\scriptscriptstyle\prg$}}\supsharp$, and then, we can apply Theorems~\ref{chp7:thm:back-up} and~\ref{chp7:thm:back-down} and obtain Theorem~\ref{chp7:thm:forward}. 

We may define the pre-image function based on the image function 
since the image function on the singletons defines the program semantics.
\begin{lemma} For a function $\!f\!$ with  $\img_f\!\colon \!\wp(X) \rightarrow \wp(Y)$ then
 $\pre_{\!f}(A)  \!=\! \{ x \!\in\! X \mid \img_f(\!\{x\}\!) \cap A \!\neq\! \emptyset\}.$
\end{lemma}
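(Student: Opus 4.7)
The plan is to unfold the definitions of $\img_f$ and $\pre_f$ given in the preliminaries and observe that on singleton inputs the image collapses to a singleton output, after which the stated characterization follows from a trivial set-theoretic equivalence.

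First I would apply the definition $\img_f(A) \triangleq \{f(x) \in Y \mid x \in A\}$ to the one-point set $\{x\}$, yielding $\img_f(\{x\}) = \{f(x)\}$. Next I would note the elementary equivalence $\{f(x)\} \cap A \neq \emptyset \iff f(x) \in A$, valid for any set $A \subseteq Y$. Combining these, the set $\{x \in X \mid \img_f(\{x\}) \cap A \neq \emptyset\}$ is exactly $\{x \in X \mid f(x) \in A\}$, which by the definition $\pre_f(A) \triangleq \{x \in X \mid f(x) \in A\}$ is $\pre_f(A)$. Equality of the two sets then follows by a routine double-inclusion argument (or directly, since each step above is an ``iff'').

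Since every step is an unfolding of a given definition, there is essentially no obstacle; the only mild subtlety worth flagging is that the argument uses the fact that $f$ is a genuine (single-valued) function, so that $\img_f(\{x\})$ is a singleton rather than a larger set — were $f$ relational this characterization would still hold but would need to be phrased as ``$\exists y \in \img_f(\{x\}) \cap A$''. Given the paper's standing assumption that $\prg$ is functional, this point is automatic.
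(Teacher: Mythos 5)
Your argument is correct and matches the paper's own proof: both rest on the observation that $\img_f(\{x\}) = \{f(x)\}$ for a function $f$, after which the claim reduces to the definition of $\pre_f$ via the equivalence $\{f(x)\}\cap A \neq \emptyset \iff f(x)\in A$. You merely spell out this last equivalence explicitly, which the paper leaves implicit.
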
 
\begin{proof}
When $f$ is  a function,  $\img_f(\{x\}) = \{f(x)\}$. Thus, the above is a direct consequence of the definition of $\pre_f$, \ie{} $\pre_f(A) \triangleq \{ x \in X \mid f(x) \in A\}$.Â 
\end{proof}
\noindent In our case, we do not have an image function; rather, we have an image over-approximating function $\img\supsharp_f$, \ie $\img_f \preceq \img\supsharp_f$; we may instead use that to define a pre-image over-approximation function $\pre\supsharp_f$. 
\begin{lemma}\label{chp7:lemma:imgftopref1}
Let $f$ be a function with image function
 $\img_f$ and 
 pre-image function 
 $\pre_f$, and let $\img\supsharp_f$ be a function whereby $\img_f \preceq \img\supsharp_f$. If we let $\pre\supsharp_f(A) \triangleq \{ x \in X \mid \img\supsharp_{f}(\{x\}) \cap A \neq \emptyset\}$,
then $\pre\supsharp_f(A) \supseteq \pre_f(A).$
\end{lemma}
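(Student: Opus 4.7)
The plan is to prove $\pre\supsharp_f(A) \supseteq \pre_f(A)$ by elementwise inclusion, using the facts that $f$ is a function (so $\img_f(\{x\}) = \{f(x)\}$) and that $\img_f \preceq \img\supsharp_f$ pointwise on all inputs, in particular on singletons.

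First I would take an arbitrary $x \in \pre_f(A)$. By the defining property of the pre-image, $f(x) \in A$. Since $f$ is a function, $\img_f(\{x\}) = \{f(x)\}$, so $\img_f(\{x\}) \cap A = \{f(x)\}$, which is nonempty.

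Next, I would invoke the assumption $\img_f \preceq \img\supsharp_f$, instantiated at the singleton $\{x\}$, to get $\img_f(\{x\}) \subseteq \img\supsharp_f(\{x\})$. Intersecting both sides with $A$ preserves inclusion, so $\img\supsharp_f(\{x\}) \cap A \supseteq \img_f(\{x\}) \cap A \neq \emptyset$. Hence $\img\supsharp_f(\{x\}) \cap A \neq \emptyset$, which by the definition of $\pre\supsharp_f$ means $x \in \pre\supsharp_f(A)$. Since $x$ was arbitrary, $\pre_f(A) \subseteq \pre\supsharp_f(A)$, i.e., $\pre_f \preceq \pre\supsharp_f$.

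There is no real obstacle here — the argument is a straightforward unfolding of the definitions, and the only nontrivial input is the monotonicity of intersection with $A$ under the set-inclusion $\img_f(\{x\}) \subseteq \img\supsharp_f(\{x\})$. No measurability considerations are required at this stage; they enter only when $\pre\supsharp_f$ is later composed with an abstraction $\uparrow$ in the forward counterpart of Theorems~\ref{chp7:thm:back-up} and~\ref{chp7:thm:back-down}.
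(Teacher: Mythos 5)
Your proof is correct and follows essentially the same route as the paper's: the paper chains the set-builder characterization $\pre_f(A) = \{x \in X \mid \img_f(\{x\}) \cap A \neq \emptyset\}$ (its preceding lemma) with the inclusion $\img_f(\{x\}) \subseteq \img\supsharp_f(\{x\})$, which is exactly your elementwise argument written in one line. Nothing is missing.
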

\begin{proof}
$\!\!\pre_f\!(A) \!=\! \{x \!\in\! X\!\mid\! \img_f(\{x\}) \!\cap\! A \!\neq\! \emptyset \}\! \subseteq\! \{x {\in} X\!\mid\! \img\supsharp_{f}(\{x\}) {\cap} A \!\neq\! \emptyset \} \!=\! \pre\supsharp_f\!(A).$
\end{proof}
\noindent For programs whereby $X$ is finite, any output event is computable, but when $X$ is infinite, they are not. 
Instead, we propose a computable $\pre\supsharp_f$ based on $\img_f\supsharp$ and a finite partition of input $X$. 
\begin{lemma}
\label{chp7:lem:for}
Let $f\colon (X,\mathcal{X}) \rightarrow (Y,\mathcal{Y})$ be a measurable function, let the function $\img\supsharp_f\colon \wp(X) \rightarrow \wp(Y)$ over-approximate $\img_f$, and let $\mathbf{T}$ denote the set of all partitions over $X$.
We define a function $pre_f\supsharp\colon \mathbf{T} \rightarrow (\wp(Y) \rightarrow \wp(X))$ 
 as $pre_f\supsharp[T](B)  \triangleq \bigcup\{ t \in T \mid \img\supsharp_f(t) \cap B \neq \emptyset \}$
Then, 
$pre_{f} \preceq pre\supsharp_f[T].$
\end{lemma}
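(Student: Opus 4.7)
The plan is to prove the set inclusion $pre_f(B) \subseteq pre\supsharp_f[T](B)$ directly by picking an arbitrary element of the left-hand side and tracking it into the union on the right. Fix $B \in \wp(Y)$ and take any $x \in pre_f(B)$, which by definition means $f(x) \in B$. Since $T$ is a partition of $X$, there is a (unique) block $t \in T$ containing $x$; I would call this block $t_x$. The entire argument then reduces to verifying that $t_x$ is among the blocks being unioned on the right-hand side, i.e.\ that $\img\supsharp_f(t_x) \cap B \neq \emptyset$.

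Next I would establish this non-emptiness by chaining two inclusions. First, because $x \in t_x$ and $\img_f$ is the image of $f$, we have $f(x) \in \img_f(t_x)$. Second, by the standing assumption $\img_f \preceq \img\supsharp_f$, we get $\img_f(t_x) \subseteq \img\supsharp_f(t_x)$, hence $f(x) \in \img\supsharp_f(t_x)$. Combined with the initial fact $f(x) \in B$, this shows $f(x) \in \img\supsharp_f(t_x) \cap B$, so the intersection is nonempty. Therefore $t_x$ appears in the indexing set of the union defining $pre_f\supsharp[T](B)$, and since $x \in t_x$, we conclude $x \in pre_f\supsharp[T](B)$.

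Since $x \in pre_f(B)$ was arbitrary, $pre_f(B) \subseteq pre_f\supsharp[T](B)$, and because this holds for every $B$, we obtain $pre_f \preceq pre_f\supsharp[T]$ as desired. The argument is essentially a bookkeeping exercise; no real obstacle arises because totality of $f$ is not needed here (we only need $f$ to be defined at $x$, which is built into $x \in pre_f(B)$) and no measurability considerations enter the statement. The only subtlety worth flagging is the use of the partition axiom to guarantee existence of the block $t_x$ — without the partition covering $X$, the element $x$ might fail to land in any block, so the union on the right could miss it.
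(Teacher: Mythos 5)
Your proof is correct and follows essentially the same route as the paper's: locate the partition block containing $x$, use $f(x)\in\img_f(t_x)\subseteq\img\supsharp_f(t_x)$ together with $f(x)\in B$ to see that $t_x$ qualifies for the union, and conclude. The paper phrases this as a chain of set inclusions starting from the characterization $\pre_f(B)=\{x\mid\img_f(\{x\})\cap B\neq\emptyset\}$, but the content is identical, and your remarks about totality being unnecessary and the partition's covering property being the essential hypothesis are accurate.
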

\begin{proof}
$pre_{f}(B) =  \{x \in X\mid \img_f(\{x\}) \cap B \neq \emptyset \} 
 \subseteq \{x \in X\mid t \in T \land x \in t \land \img_f(\{t\}) \cap B \neq \emptyset \} 
\subseteq \break \{x \in X\mid t \in T \land x \in t \land \img_f\supsharp(\{t\}) \cap B \neq \emptyset \} 
 \subseteq \bigcup\{ t \in T \mid \img_f\supsharp(\{t\}) \cap B \neq \emptyset \}   = pre\supsharp_f[T](B)
$
\end{proof}

\begin{proposition}
\label{chp7:prop:preclosedunderunion}
$\pre_f\supsharp[T](\bigcup_{A\in \mathbf{A}}A) = \bigcup_{A\in \mathbf{A}}\pre_f\supsharp[T](A)$.
\end{proposition}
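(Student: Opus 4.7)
The plan is to prove this by expanding the definition of $\pre_f\supsharp[T]$ on both sides and reducing to a purely set-theoretic identity about intersections with unions.

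First I would unfold the left-hand side, writing
\[
\pre_f\supsharp[T]\Bigl(\bigcup_{A\in\mathbf{A}} A\Bigr) = \bigcup\Bigl\{ t \in T \;\Bigm|\; \img\supsharp_f(t) \cap \bigcup_{A\in\mathbf{A}} A \neq \emptyset \Bigr\}.
\]
The key step is the observation that intersection distributes over arbitrary union, so $\img\supsharp_f(t) \cap \bigcup_{A\in\mathbf{A}} A = \bigcup_{A\in\mathbf{A}} \bigl(\img\supsharp_f(t) \cap A\bigr)$, and this union of sets is non-empty precisely when at least one of its members is non-empty. Hence the condition $\img\supsharp_f(t) \cap \bigcup_{A\in\mathbf{A}} A \neq \emptyset$ is equivalent to $\exists A \in \mathbf{A}\colon \img\supsharp_f(t) \cap A \neq \emptyset$.

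Next I would swap the two unions: the set $\bigcup\{ t \in T \mid \exists A \in \mathbf{A}\colon \img\supsharp_f(t) \cap A \neq \emptyset\}$ equals $\bigcup_{A \in \mathbf{A}} \bigcup\{ t \in T \mid \img\supsharp_f(t) \cap A \neq \emptyset\}$, since an element $x$ lies in the former iff it belongs to some $t \in T$ for which some $A \in \mathbf{A}$ witnesses non-empty intersection, which is exactly the condition for lying in the latter. Unfolding the inner unions once more, using the definition of $\pre_f\supsharp[T]$ applied to each $A$, gives $\bigcup_{A\in\mathbf{A}} \pre_f\supsharp[T](A)$, which is the right-hand side.

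The proof involves no real obstacle; the only potential pitfall is being careful to distinguish the empty/non-empty case when moving the existential across the union, but the equivalence above handles this cleanly. Altogether the argument is a two-line chain of equalities once the distributivity observation is in place.
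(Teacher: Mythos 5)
Your proof is correct and follows essentially the same route as the paper's: unfold the definition of $\pre_f\supsharp[T]$, observe that $\img\supsharp_f(t)$ meets a union iff it meets some member, and commute the resulting existentials/unions. If anything, your argument is slightly more faithful to the statement as written, since you handle an arbitrary family $\mathbf{A}$ directly, whereas the paper's printed proof only carries out the element-wise equivalence for a binary union $A \cup B$.
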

\begin{proof}\label{prop:preclosedunderunion:app}
$a \in pre\supsharp_f[T](A \cup B)  
    \; \Leftrightarrow \;
\exists t \in T \colon  a \in t \land \img\supsharp_f(t) \cap (A \cup B) \neq \emptyset
    \; \Leftrightarrow \; 
\exists t \in T \colon  a \in t \land ( \img\supsharp_f(t) \cap A \neq \emptyset)  \;\lor \; ( \img\supsharp_f(t) \cap B \neq \emptyset)
    \; \Leftrightarrow \;
\exists t \in T \colon  (a \in t \land \img\supsharp_f(t) \cap A \neq \emptyset) \;\lor \; (a \in t \land \img\supsharp_f(t) \cap B \neq \emptyset)
    \; \Leftrightarrow \;
a \in pre\supsharp_f[T](A) \cup pre\supsharp_f[T](B)
$
%
\end{proof}
\begin{corollary} \label{chp7:prop:finitemeas}
For any partition $T$ over $X$, $pre\supsharp_f[T]$ is monotone.
\end{corollary}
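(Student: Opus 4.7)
The plan is to derive monotonicity as an immediate consequence of the union-preservation property established in Proposition~\ref{chp7:prop:preclosedunderunion}. Monotonicity of a function $g\colon \wp(Y) \rightarrow \wp(X)$ is the statement that $A \subseteq B$ implies $g(A) \subseteq g(B)$, and for any function that distributes over binary unions this is essentially automatic.

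Concretely, I would take arbitrary $A, B \in \wp(Y)$ with $A \subseteq B$ and observe that $B = A \cup B$. Applying Proposition~\ref{chp7:prop:preclosedunderunion} to the two-element family $\{A, B\}$ yields
\[
\pre_f\supsharp[T](B) \;=\; \pre_f\supsharp[T](A \cup B) \;=\; \pre_f\supsharp[T](A) \cup \pre_f\supsharp[T](B),
\]
from which $\pre_f\supsharp[T](A) \subseteq \pre_f\supsharp[T](B)$ is read off directly.

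There is no real obstacle here: the corollary is essentially a restatement of the fact that any union-preserving map between powersets is monotone, and Proposition~\ref{chp7:prop:preclosedunderunion} supplies exactly the needed union-preservation. The only thing to notice is that the proposition is stated for arbitrary (possibly infinite) families indexed by $\mathbf{A}$, so specializing to the two-element family $\{A, B\}$ is legitimate; alternatively, one could unfold the definition $\pre_f\supsharp[T](B) = \bigcup\{t \in T \mid \img\supsharp_f(t) \cap B \neq \emptyset\}$ and note that enlarging $B$ can only add blocks $t$ to the union, never remove them.
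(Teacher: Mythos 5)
Your proof is correct and matches the paper's intent exactly: the corollary is stated without an explicit proof precisely because it is meant to follow from Proposition~\ref{chp7:prop:preclosedunderunion} by the standard ``union-preserving implies monotone'' argument, which is what you carry out. Writing $B = A \cup B$ and reading off the inclusion from the union-preservation identity is the right (and essentially only) step needed.
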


\noindent Applying Theorems~\ref{chp7:thm:back-up} and~\ref{chp7:thm:back-down} to the computable $\pre\supsharp[T]$, we obtain computable upper/lower probabilities.
\begin{theorem}\label{chp7:thm:forward}
Let $(X,\mathcal{X},\mu)$ be a probability space, and let $f\colon (X,\mathcal{X}) \rightarrow (Y,\mathcal{Y})$ be a measurable function that induces the output probability measure $\mu_f\colon \mathcal{Y} \rightarrow [0,1]$, \ie{} $\mu_f = \mu \comp \pre_f$. Given a function $\img\supsharp_f\colon \wp(X) \rightarrow \wp(Y)$ such that $\img_f \preceq \img_f\supsharp$,  a finite  partition $T$ over $X$, and a monotone abstraction $\uparrow\colon \wp(X) \rightarrow \mathcal{X}$,
we let ${\pre'}_f\supsharp[T](A) \triangleq \abs{\bigcup\{ t  \mid \exists t\in T\colon \img_f\supsharp(t) \cap A \neq \emptyset \}} $ and 
${\pre'}_f\supflat[T](A) \triangleq \stcompc{{\pre'}\supsharp_f[T](\stcompc{A})}$,
and we define 
$\mu_f\supsharp \triangleq \mu \comp {{\pre'}}\supsharp_f[T]$ and
$\mu_f\supflat \triangleq \mu \comp {\pre'}\supflat_f[T]$.
Then, 
\begin{inparaenum}[(i)]
\item $ \mu_f\supflat(A) \leq \mu_f(A) \leq \mu_f\supsharp(A)$
\item $ \mu_f\supflat(A) = 1 - \mu_f\supsharp(\stcompc{A}), \text{ and}$ \label{chp:thm7:II}
\item $\mu_f\supflat $ and $\mu_f\supsharp$ are monotone.
\end{inparaenum}
\end{theorem}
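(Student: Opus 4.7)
The plan is to recognize that this theorem is essentially a packaging of the backward results (Theorems~\ref{chp7:thm:back-up} and~\ref{chp7:thm:back-down}) applied to a specific pre-image over-approximation obtained from the forward data. First I would observe that ${\pre'}_f\supsharp[T] = \;\uparrow\! \comp\; \pre_f\supsharp[T]$, where $\pre_f\supsharp[T]$ is exactly the function constructed in Lemma~\ref{chp7:lem:for}; that lemma already establishes $\pre_f \preceq \pre_f\supsharp[T]$. Similarly, ${\pre'}_f\supflat[T](A) = \stcompc{{\pre'}_f\supsharp[T](\stcompc{A})}$ is by definition the dual of ${\pre'}_f\supsharp[T]$, matching the construction used inside Theorem~\ref{chp7:thm:back-down}.

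For part (i), I would apply Theorem~\ref{chp7:thm:back-up} with the pre-image over-approximation taken to be $\pre_f\supsharp[T]$ together with the given abstraction $\uparrow$. This yields $\mu_f(A) \leq \mu(\uparrow \comp \pre_f\supsharp[T](A)) = \mu_f\supsharp(A)$. Then I would apply Theorem~\ref{chp7:thm:back-down} with the same data; since the theorem's ${\pre'}_f\supflat[T]$ is precisely the dual of $\uparrow \comp \pre_f\supsharp[T]$, we obtain $\mu_f\supflat(A) \leq \mu_f(A)$, completing the two-sided bound. Here I rely on totality of $f$ (which is assumed throughout Section~\ref{sec:reusingexistinganalysis}) so that Lemma~\ref{chp7:lem:dualunderapproxpre} applies.

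Part (ii) is an immediate consequence of the second conclusion of Theorem~\ref{chp7:thm:back-down}, since that theorem states $\mu_f\supflat(A) = 1 - \mu_f\supsharp(\stcompc{A})$ under exactly the construction used here. For part (iii), monotonicity of ${\pre'}_f\supsharp[T]$ follows because $\pre_f\supsharp[T]$ is monotone by Corollary~\ref{chp7:prop:finitemeas} and $\uparrow$ is monotone by hypothesis; the monotonicity assertions in Theorems~\ref{chp7:thm:back-up} and~\ref{chp7:thm:back-down} then deliver monotonicity of $\mu_f\supsharp$ and $\mu_f\supflat$ respectively (the latter also using Lemma~\ref{chp7:prop:prebismonotone}).

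The main obstacle is not really technical but notational: one must carefully check that the parameterized construction ${\pre'}_f\supsharp[T]$ written in the statement literally agrees with $\uparrow \comp \pre_f\supsharp[T]$ from Lemma~\ref{chp7:lem:for} (the union ranges over partition blocks $t$ meeting $\img_f\supsharp(t) \cap A \neq \emptyset$), and that the measurability premise of the backward theorems is satisfied — which it is because Lemma~\ref{chp7:lem:comp} guarantees $\uparrow\! \comp\; \pre_f\supsharp[T](A) \in \mathcal{X}$ and Lemma~\ref{chp7:lem:dualmeasurable} then gives measurability of the dual. Once these identifications are made, the proof reduces to two invocations of the previously established backward theorems with no further computation.
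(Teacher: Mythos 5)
Your proposal is correct and follows essentially the same route as the paper's own proof: identify ${\pre'}_f\supsharp[T]$ as $\uparrow \comp \pre_f\supsharp[T]$, invoke Lemma~\ref{chp7:lem:for} for the over-approximation property and Corollary~\ref{chp7:prop:finitemeas} for monotonicity, and then reduce all three claims to Theorems~\ref{chp7:thm:back-up} and~\ref{chp7:thm:back-down}. Your version simply spells out the bookkeeping (duality, measurability via Lemmas~\ref{chp7:lem:comp} and~\ref{chp7:lem:dualmeasurable}) that the paper leaves implicit.
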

\begin{proof}
The function $\pre\supsharp_f[T]$ over-approximates $\pre_f$ (by Lemma~\ref{chp7:lem:for}), and it is monotone (by Corollary~\ref{chp7:prop:finitemeas}). Thus, the above is a direct consequence of Theorems~\ref{chp7:thm:back-up} and~\ref{chp7:thm:back-down}. 
\end{proof}
\noindent 
%
%

When choosing the partition $T$ with elements measurable in $\mathcal{X}$, then $\pre\supsharp_f[T](A) \in \mathcal{X}$ for every output event $A$. In this case
we may use identity as abstraction and can unfold the $\mu_f\supsharp$ and $\mu_f\supflat$ into a simpler form. 

\begin{restatable}{theorem}{thmsimpleBounds}
\label{chp7:thm:simpleBounds}
Let $f\colon (X,\mathcal{X}) \rightarrow (Y,\mathcal{Y})$ be a measurable function with the image function $\img_f$ and the pre-image function $\pre_f$, let $(X,\mathcal{X},\mu)$ be a probability space, let $\img\supsharp_f\colon \wp(X) \rightarrow \mathcal{Y}$ be a function that over-approximates $\img_f$, and let
$T$ be a finite partition over $X$ such that $T \subseteq \mathcal{X}$. Then,
$$ \textstyle\mu\supsharp_f(A) =  \sum_{t\in T, \img\supsharp(t) \cap A \neq \emptyset}\mu(t)  \qquad\text{ and }\qquad 
\mu\supflat_f(A) = \sum_{t\in T, \img\supsharp(t) \subseteq A}\mu(t).$$
\end{restatable}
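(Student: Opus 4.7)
The plan is to simply unfold the formulas of Theorem~\ref{chp7:thm:forward} under the additional assumption $T \subseteq \mathcal{X}$. First I would observe that since $T$ is a \emph{finite} family of measurable sets, any subfamily $S \subseteq T$ has $\bigcup S \in \mathcal{X}$ by closure of $\sigma$-algebras under (countable) unions. In particular, $\bigcup\{t \in T \mid \img_f\supsharp(t) \cap A \neq \emptyset\} \in \mathcal{X}$ for every $A$, so we may legitimately take $\uparrow$ to be the identity. Hence ${\pre'}_f\supsharp[T](A)$ collapses to $\bigcup\{t \in T \mid \img_f\supsharp(t) \cap A \neq \emptyset\}$, a finite disjoint union of measurable sets.

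Next I would invoke finite additivity of $\mu$ (a consequence of countable additivity, using that the elements of the partition $T$ are pairwise disjoint) to rewrite
\[
\mu_f\supsharp(A) \;=\; \mu\bigl(\textstyle\bigcup\{t \in T \mid \img_f\supsharp(t) \cap A \neq \emptyset\}\bigr) \;=\; \sum_{t \in T,\, \img\supsharp(t) \cap A \neq \emptyset} \mu(t),
\]
which is the first claimed formula.

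For the lower bound I would rely on part~(\ref{chp:thm7:II}) of Theorem~\ref{chp7:thm:forward}, namely $\mu_f\supflat(A) = 1 - \mu_f\supsharp(\stcompc{A})$. Substituting the formula just derived and using that $T$ is a partition of $X$, so $\sum_{t \in T} \mu(t) = \mu(X) = 1$, I get
\[
\mu_f\supflat(A) \;=\; \sum_{t \in T} \mu(t) \;-\! \sum_{t \in T,\, \img\supsharp(t)\cap\stcompc{A}\neq\emptyset}\! \mu(t) \;=\! \sum_{t \in T,\, \img\supsharp(t)\cap\stcompc{A} = \emptyset}\! \mu(t).
\]
Finally, the elementary set-theoretic equivalence $B \cap \stcompc{A} = \emptyset \iff B \subseteq A$ rewrites the indexing condition as $\img\supsharp(t) \subseteq A$, giving the second formula.

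The only mildly delicate step is justifying that the abstraction can be dropped; once the unions are known to land inside $\mathcal{X}$, the rest is a routine manipulation using finite additivity and the duality from Theorem~\ref{chp7:thm:forward}.
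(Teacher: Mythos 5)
Your proposal is correct and follows essentially the same route as the paper: both reduce to Theorem~\ref{chp7:thm:forward} with $\uparrow=\textit{id}$ (justified because unions of subfamilies of the measurable partition $T$ lie in $\mathcal{X}$) and then apply finite additivity of $\mu$ over the pairwise disjoint partition elements. The only cosmetic difference is in the lower bound: the paper unfolds the set ${\pre'}_f\supflat[T](A)$ directly into $\bigcup\{t\in T\mid \img\supsharp(t)\subseteq A\}$ and measures it, whereas you route the computation numerically through the duality $\mu_f\supflat(A)=1-\mu_f\supsharp(\stcompc{A})$ and $\sum_{t\in T}\mu(t)=1$; both arguments are sound and equivalent in substance.
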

\begin{proof}
Both proof parts rely on Thm.~\ref{chp7:thm:forward} where $\uparrow$ is the identity function, and $T$'s elements are non-overlapping and mesaurable and $\mu$ is  additive; furthermore, the proof of $\mu\supflat_f$ relies on the unfolding 
${\pre'}_f\supflat[T](A) = {\pre'}\supsharp_f[T](A) \setminus {\pre'}\supsharp_f[T](\stcompc{A})
= \bigcup\{ t\in T \mid \img_f\supsharp(t) \cap A \neq \emptyset \land \neg(\img_f\supsharp(t) \cap \stcompc{A} \neq \emptyset ) \}
= \bigcup\{ t\in T \mid \img_f\supsharp(t) \cap A \neq \emptyset \land \img\supsharp(t) \subseteq A\} = \bigcup\{ t\in T \mid \img\supsharp(t) \subseteq A \}
$.
%
%
\end{proof}
\noindent For readers familiar with Dempster-Shafer theory, the partition is a set of focal elements and Theorem \ref{chp7:thm:forward} generalizes to any set of (overlapping) focal elements; Theorem~\ref{chp7:thm:simpleBounds} resembles the belief and plausibility functions defined based on focal elements~\cite{Klir2007}. Furthermore, the lower probability bounds defines a belief function, which have been related to inner measures~\cite{Ruspini,Halpern1989,Shafer1990}.

When $\img\supsharp$ is monotone, a finer partition yields tighter probability bounds; however, for a countable infinite $X$, there is no finest finite partition.

\paragraph{Combining analyses}
When one or more analyses are forward analyses we can apply the above methods and combine the resulting pre-images using Lemma~\ref{chp7:lem:combinepre}; when both are forward analyses they can be combined directly. 
\begin{lemma}\label{chp7:lem:combineimg}
Let 
$\img_f, \img_f\supsharp, {\img'}_f\supsharp \colon \wp(X) \rightarrow \wp(Y)$ be three functions such that $\img_f \preceq \img_f\supsharp $ and $ \img_f \preceq {\img'}_f\supsharp$.
Then,
$\img_f(A) \subseteq \img_f\supsharp(A) \cap {\img'}_f\supsharp(A).$
\end{lemma}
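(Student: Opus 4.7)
The statement is the forward-analysis analogue of Lemma~\ref{chp7:lem:combinepre}, and the plan is essentially to repeat that proof verbatim in the image direction: unpack the definition of $\preceq$ on both hypotheses and combine the two inclusions by the elementary fact that a set contained in two supersets is contained in their intersection.

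Concretely, I would fix an arbitrary $A \in \wp(X)$ and argue as follows. From $\img_f \preceq \img_f\supsharp$, the definition of $\preceq$ (from Section~\ref{sec:backward}, where $\preceq$ is declared for arbitrary functions $\wp(\cdot) \to \wp(\cdot)$) gives $\img_f(A) \subseteq \img_f\supsharp(A)$. Similarly, $\img_f \preceq {\img'}_f\supsharp$ gives $\img_f(A) \subseteq {\img'}_f\supsharp(A)$. Then the standard set-theoretic principle that $B \subseteq C$ and $B \subseteq D$ imply $B \subseteq C \cap D$ yields $\img_f(A) \subseteq \img_f\supsharp(A) \cap {\img'}_f\supsharp(A)$, which is precisely the claim. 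Since $A$ was arbitrary, we are done.

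There is no real obstacle: the only thing to check is that $\preceq$ is defined pointwise-by-inclusion for functions of the form $\wp(X) \to \wp(Y)$ as well as $\wp(Y) \to \wp(X)$, which is clear from the definition given earlier. I would keep the presentation as a single chain of inclusions, mirroring the style of Lemma~\ref{chp7:lem:combinepre}, so that the symmetry between the forward and backward ``combining'' lemmas is transparent to the reader.
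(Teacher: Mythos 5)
Your argument is correct and is the only reasonable one: the paper itself states this lemma without proof (as it does its backward counterpart, Lemma~\ref{chp7:lem:combinepre}), evidently regarding the pointwise unpacking of $\preceq$ followed by $B \subseteq C \,\wedge\, B \subseteq D \Rightarrow B \subseteq C \cap D$ as immediate. Your remark that $\preceq$, though formally defined only for functions $\wp(Y) \rightarrow \wp(X)$, is meant as pointwise inclusion for either direction is the right reading and closes the only pedantic gap.
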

\section{Examples}\label{sec:casestudies}
In the following we apply the above presented forward approach to two output analysis, namely a sign analysis and an interval analysis and compose them with termination analysis. We study three simple programs; for two of them we provide step-by-step example calculations. For the third program, we first provide the results showing an improvement compared to the essential and still cutting-edge results by Monniaux~\cite{Monniaux2001} and, afterwards, we provide a simple example demonstrating what causes the difference in the results.
\subsection{Sign and Termination analyses}\label{sec:sign}
The program \texttt{sum} (Figure~\ref{chp7:fig:nontermNsign:program:1}) calculates $\sum_{i=1}^{\texttt{x}}i$ for an input \texttt{x}.
We analyse the output properties $\wp(S)$, $S=\{\mathbb{Z}^{-},\{0\},\mathbb{Z}^+, \bot\}$ where $\bot$ represents non-terminating computations. 
We will derive upper and lower probability bounds for the program's output events 
\begin{enumerate}
\item reusing a standard sign analysis, \eg{},~\cite{Nielson2007}, \ie, a forward analysis $\img^{\sharp}$,
\item reusing the online termination analysis AProVE~\cite{Giesl2017}, \ie, a forward analysis $\img'^{\sharp}$, and 
\item by combining the image-over-approximating functions using Lemma~\ref{chp7:lem:combineimg} constructing a new $\img''^{\sharp}$ that provides results that are stronger than we could from the probability bounds. 
\end{enumerate}

Since they are both forward analysis and the inputs are integers, we use the formulas provided in Theorem~\ref{chp7:thm:simpleBounds}. 
We will analyse the program with respect to input partition $T = \{\mathbb{Z}^{-},\{0\},\mathbb{Z}^+\}$ with input event probabilities as follows: $\mu(\mathbb{Z}^{-})= 1/3$, $\mu(\{0\})=1/4$, and $\mu(\mathbb{Z}^{+})=5/12$.
 
\paragraph{Sign analysis} The sign analysis yields partial correctness \emph{if the program terminates, the analysis' result contains the concrete program result}~\cite{Nielson2007}. Such analyses do (obviously) not conclude anything about termination/non-termination,
 and we safely assume that the output of the program is that of the analysis or $\bot$, see column $\img^{\sharp}$ 
  in 
Figure~\ref{chp7:fig:nontermNsign:table:1}.
Using the formulas from Theorem~\ref{chp7:thm:simpleBounds} we calculate the inferred upper and lower probabilistic bounds for each of the output properties, \eg, see the following example. 
\begin{example}\label{ex:sign:low+up}
To calculate the upper and lower probability of the output event $\{0\}$ the formulas from Theorem~\ref{chp7:thm:simpleBounds} require that we sum the probabilities of the input events in $T$ whose image overlaps with $\{0\}$ and we sum the probabilities of the input events in $T$ whose image is a subset of $\{0\}$, respectively.
$$\begin{array}{lrl}
\mu\supsharp_f(\{0\}) = & \sum_{t\in T, \img\supsharp(t) \cap \{0\} \neq \emptyset}\mu(t) 
 & = \sum_{t\in \{\{0\},\mathbb{Z}^{+}\}}\mu(t) 
 = \mu(\{0\}) + \mu(\mathbb{Z}^{+}) = 1/4 + 5/12 = 2/3 \\
\mu\supflat_f(\{0\}) =& \sum_{t\in T, \img\supsharp(t) \subseteq \{0\}}\mu(t) & = 
\sum_{t\in \emptyset}\mu(t) = 0
\end{array}
$$
The lower probability bound of the output event $\{0\}$ is $0$ and its upper probability bound is $2/3$; in comparison the correct probability of $\{0\}$ is $1/4$.
\end{example}
\noindent The inferred upper and lower probabilistic bounds are shown by blue dashed lines in Figure~\ref{chp7:fig:combination:1}; the correct probabilities are given by orange `$\times$'.
\begin{figure}[h] \centering 
\begin{subfigure}{0.45\textwidth} \centering \small
\begin{Verbatim}[frame= none,baselinestretch=0.1]
int (sum)(int x)
{  int y = 0;
  while (x!= 0){
     y = y + x;
     x = x - 1;}
  return y;     }
  \end{Verbatim}
  \caption{}\label{chp7:fig:nontermNsign:program:1}
\end{subfigure}
\begin{subfigure}{0.45\textwidth}\small
\begin{tabular}{l@{$\;$}|@{$\;$}l@{$\;$}|@{$\;$}l@{$\;$}}
$t \in T$        & $\mu(t)$ &  
 $\img^{\raisebox{-0.9ex}{$\scriptscriptstyle\sharp$}}$ \\ \hline
$\mathbb{Z}^{-}$ & $1/3$  & $\{\bot\} \cup \mathbb{Z}^{-}$  \\
$\{0\}$          & $1/4$  & $\{\bot, 0\}$  \\
$\mathbb{Z}^{+}$ & $5/12$ & $\{\bot, 0\}\cup \mathbb{Z}^{+}$\\
\end{tabular}\vspace{-0.3cm}
\caption{}\label{chp7:fig:nontermNsign:table:1}
\end{subfigure}
\begin{subfigure}{\textwidth}
\begin{tikzpicture}
      \begin{axis}[%
        height=3.2cm,
        align =center, title style={yshift=-10pt,font=\footnotesize},
        title={Probabilities and inferred probability bounds using a sign analysis},
        xtick={-1,0,1,2,3,4,5,6,7,8,9,10,11,12,13,14,15,16},
        scaled ticks=false,
        ytick={0, 0.33,0.67, 1},
        y tick label style={font=\footnotesize},
        yticklabels={$0$,$1/3$,$2/3$,$1$},
        ymin=-0.1,
        ymax=1.1,
        xmin=-0.5,
        xmax=15.5,
        width=0.95\textwidth,
        axis y line*=left,
        axis x line*=bottom,
        line width=0.2mm,
        x tick label style={rotate=-30,anchor=west,font=\footnotesize},
                xticklabels={~,
                $\emptyset$, 
                $ \bot      $,
                $\mathbb{Z}^{-}$, 
                $\{0\}  $,        
                $\mathbb{Z}^{+} $, 
                $\{\!\bot\!\} \cup \mathbb{Z}^{-}$, 
                ${\{\bot, 0\}   }      $, 
                $\{\!\bot\!\} \cup \mathbb{Z}^{+}$, 
                $\mathbb{Z}^{-} \cup \{0\} $,   
                $\mathbb{Z}^{-} \cup \mathbb{Z}^{+}$, 
                $\{0\} \cup \mathbb{Z}^{+}     $,     
                $S\setminus\!\!\mathbb{Z}^{+}$,
                $S\setminus\!\! \{0\}$, 
                $S\setminus\!\!\mathbb{Z}^{-}$,
                $S\setminus\!\!\{\bot\}$, 
                $S$, 
                \empty}                             
      ]

\addplot[very thin,gray,opacity=0.2] coordinates {(-1,1) (16,1)};      
\addplot[very thin,gray,opacity=0.2] coordinates {(-1,0.66666) (16,0.66666)};      
\addplot[very thin,gray,opacity=0.2] coordinates {(-1,0.33333) (16,0.33333)};
\addplot[very thin,gray,opacity=0.2] coordinates {(-1,0) (16,0)};      

      \addplot[ mark options = solid,mark = -, dashed, mycolor!60!white,shift={(0,0)}] coordinates {(0,0) (0,0)};
      \addplot[ mark options = solid,mark = -, dashed, mycolor!60!white,shift={(0,0)}] coordinates {(1,0) (1,1)};
      \addplot[ mark options = solid,mark = -, dashed, mycolor!60!white,shift={(0,0)}] coordinates {(2,0) (2,0.33)};
      \addplot[ mark options = solid,mark = -, dashed, mycolor!60!white,shift={(0,0)}] coordinates {(3,0) (3,0.67)};
      \addplot[ mark options = solid,mark = -, dashed, mycolor!60!white,shift={(0,0)}] coordinates {(4,0) (4,0.42)};
      \addplot[ mark options = solid,mark = -, dashed, mycolor!60!white,shift={(0,0)}] coordinates {(5,0.33) (5,1)};
      \addplot[ mark options = solid,mark = -, dashed, mycolor!60!white,shift={(0,0)}] coordinates {(6,0.25) (6,1)};
      \addplot[ mark options = solid,mark = -, dashed, mycolor!60!white,shift={(0,0)}] coordinates {(7,0) (7,1)};
      \addplot[ mark options = solid,mark = -, dashed, mycolor!60!white,shift={(0,0)}] coordinates {(8,0) (8,1)};
      \addplot[ mark options = solid,mark = -, dashed, mycolor!60!white,shift={(0,0)}] coordinates {(9,0) (9,0.75)};
      \addplot[ mark options = solid,mark = -, dashed, mycolor!60!white,shift={(0,0)}] coordinates {(10,0) (10,0.67)};
      \addplot[ mark options = solid,mark = -, dashed, mycolor!60!white,shift={(0,0)}] coordinates {(11,0.58) (11,1)};
      \addplot[ mark options = solid,mark = -, dashed, mycolor!60!white,shift={(0,0)}] coordinates {(12,0.33) (12,1)};
      \addplot[ mark options = solid,mark = -, dashed, mycolor!60!white,shift={(0,0)}] coordinates {(13,0.67) (13,1)};
      \addplot[ mark options = solid,mark = -, dashed, mycolor!60!white,shift={(0,0)}] coordinates {(14,0) (14,1)};
      \addplot[ mark options = solid,mark = -, dashed, mycolor!60!white,shift={(0,0)}] coordinates {(15,1) (15,1)};   
 
     \addplot[mark=x, mycolor3,shift={(0,0)}] coordinates  {(0,0) (0,0)};
    \addplot[mark=x, mycolor3,shift={(0,0)}] coordinates  {(1,0.33) (1,0.33)};
    \addplot[mark=x, mycolor3,shift={(0,0)}] coordinates  {(2,0) (2,0)};
    \addplot[mark=x, mycolor3,shift={(0,0)}] coordinates  {(3,0.25) (3,0.25)};
    \addplot[mark=x, mycolor3,shift={(0,0)}] coordinates  {(4,0.42) (4,0.42)};
    \addplot[mark=x, mycolor3,shift={(0,0)}] coordinates  {(5,0.33) (5,0.33)};
    \addplot[mark=x, mycolor3,shift={(0,0)}] coordinates  {(6,0.58) (6,0.58)};
    \addplot[mark=x, mycolor3,shift={(0,0)}] coordinates  {(7,0.75) (7,0.75)};
    \addplot[mark=x, mycolor3,shift={(0,0)}] coordinates  {(8,0.67) (8,0.67)};
    \addplot[mark=x, mycolor3,shift={(0,0)}] coordinates  {(9,0.42) (9,0.42)};
    \addplot[mark=x, mycolor3,shift={(0,0)}] coordinates  {(10,0.67) (10,0.67)};
    \addplot[mark=x, mycolor3,shift={(0,0)}] coordinates  {(11,0.58) (11,0.58)};
    \addplot[mark=x, mycolor3,shift={(0,0)}] coordinates  {(12,0.75) (12,0.75)};    
    \addplot[mark=x, mycolor3,shift={(0,0)}] coordinates  {(13,1) (13,1)};
    \addplot[mark=x, mycolor3,shift={(0,0)}] coordinates  {(14,0.67) (14,0.67)};
    \addplot[mark=x, mycolor3,shift={(0,0)}] coordinates  {(15,1) (15,1)};
       \end{axis}
    \end{tikzpicture}\vspace{-0.6cm}%
    \caption{} \label{chp7:fig:combination:1}
\end{subfigure} \vspace{-0.4cm}
\caption{The upper and lower probabilistic bounds inferred using the sign-analysis are shown by blue dashed lines and the correct probabilities are shown by the orange `$\times$'.}\label{chp7:fig:nontermNsign:1}
\end{figure}
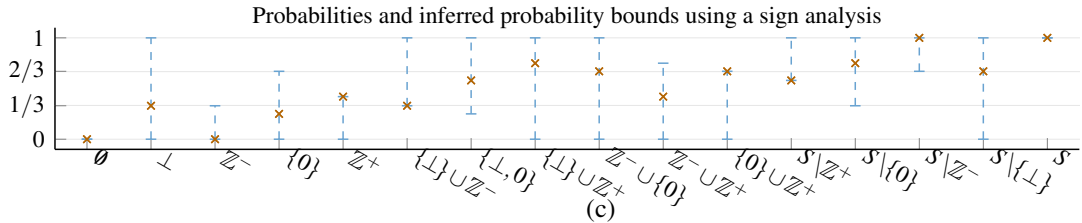
\FloatBarrier
\paragraph{Termination analysis}
We applied the termination analyser AProVE~\cite{Giesl2017} on altered versions of the program\footnote{For each partition element, we made sure that the alternative inputs caused the program to stop and return an integer.} for each partition element $t$ to determine whether $\{\bot\}$ is not in $t$'s image, \ie, $\img'^{\sharp}(t)=\{\mathbb{Z}\}$ or may be a part of $t$'s image, \ie, $\img'^{\sharp}(t)=\{\bot, \mathbb{Z}\}$. The obtained results are displayed in column $\img'^{\sharp}$ in Figure~\ref{chp7:fig:nontermNsign:table:2}. Again, we use the formulas from Theorem~\ref{chp7:thm:simpleBounds} to obtain upper and lower bounds, \eg, see following example.
\begin{example}\label{ex:term:low+up}
To calculate the upper and lower probability of the output event $\{0\}$ the formulas from Theorem~\ref{chp7:thm:simpleBounds} require that we sum the probabilities of the input events in $T$ whose image overlaps with $\{0\}$ and we sum the probabilities of the input events in $T$ whose image is a subset of $\{0\}$, respectively.
Again, we use the formulas from Theorem~\ref{chp7:thm:simpleBounds} to derive upper and lower probability bounds. 
$$\begin{array}{lrl}
\mu\supsharp_f(\{0\}) = & \sum_{t\in T, \img'^{\sharp}(t) \cap \{0\} \neq \emptyset}\mu(t) 
 & = \sum_{t\in \{\mathbb{Z}^{-},\{0\},\mathbb{Z}^{+}\}}\mu(t) 
 = \mu(\mathbb{Z}^{-}) + \mu(\{0\}) + \mu(\mathbb{Z}^{+}) = 1 \\
\mu\supflat_f(\{0\}) =& \sum_{t\in T, \img'^{\sharp}(t) \subseteq \{0\}}\mu(t) & = 
\sum_{t\in \emptyset}\mu(t) = 0
\end{array}
$$
The lower probability bound of the output event $\{0\}$ is $0$ and its upper probability bound is $1$; in comparison the correct probability of $\{0\}$ is $1/4$.
\end{example}
 The inferred probabilistic bounds are shown in Figure~\ref{chp7:fig:combination:2} (green dotted) and are typically worse than those obtained by the sign-analysis.
\begin{figure}[h] \centering \vspace{-0.4cm}
\begin{subfigure}{0.22\textwidth}\small
\begin{tabular}{l@{$\;$}|@{$\;$}l@{$\;$}|@{$\;$}l@{$\;$}}
$t \in T$        & $\mu(t)$ &  
 $\img'^{\raisebox{-0.9ex}{$\scriptscriptstyle\sharp$}}$ \\ \hline
$\mathbb{Z}^{-}$ & $1/3$  & $\{\bot\}\cup \mathbb{Z}$ \\
$\{0\}$          & $1/4$  & $\mathbb{Z}$ \\
$\mathbb{Z}^{+}$ & $5/12$ & $\mathbb{Z}$\\
\end{tabular}
%
\caption{}\label{chp7:fig:nontermNsign:table:2}
\end{subfigure}~
\begin{subfigure}{0.72\textwidth}
\begin{tikzpicture}
      \begin{axis}[%
        height=3.2cm,
        align =center, title style={yshift=-10pt,font=\footnotesize},
        title={Probabilities and inferred probability bounds using a termination analysis},
        xtick={-1,0,1,2,3,4,5,6,7,8,9,10,11,12,13,14,15,16},
        scaled ticks=false,
        ytick={0, 0.33,0.67, 1},
        y tick label style={font=\footnotesize},
        yticklabels={$0$,$1/3$,$2/3$,$1$},
        ymin=-0.1,
        ymax=1.1,
        xmin=-0.5,
        xmax=15.5,
        width=1.1\textwidth,
        axis y line*=left,
        axis x line*=bottom,
        line width=0.2mm,
        x tick label style={rotate=-30,anchor=west,font=\footnotesize},
                xticklabels={~,
                $\emptyset$, 
                $ \bot      $,
                $\mathbb{Z}^{-}$, 
                $\{0\}  $,        
                $\mathbb{Z}^{+} $, 
                $\{\!\bot\!\} \cup \mathbb{Z}^{-}$, 
                ${\{\bot, 0\}   }      $, 
                $\{\!\bot\!\} \cup \mathbb{Z}^{+}$, 
                $\mathbb{Z}^{-} \cup \{0\} $,   
                $\mathbb{Z}^{-} \cup \mathbb{Z}^{+}$, 
                $\{0\} \cup \mathbb{Z}^{+}     $,     
                $S\setminus\!\!\mathbb{Z}^{+}$,
                $S\setminus\!\! \{0\}$, 
                $S\setminus\!\!\mathbb{Z}^{-}$,
                $S\setminus\!\!\{\bot\}$, 
                $S$, 
                \empty}                             
      ]  
 
\addplot[very thin,gray,opacity=0.2] coordinates {(-1,1) (16,1)};      
\addplot[very thin,gray,opacity=0.2] coordinates {(-1,0.66666) (16,0.66666)};      
\addplot[very thin,gray,opacity=0.2] coordinates {(-1,0.33333) (16,0.33333)};
\addplot[very thin,gray,opacity=0.2] coordinates {(-1,0) (16,0)};

\addplot[dotted,mark = -, mark options = solid, mycolor2,shift={(0,0)}] coordinates  {(0,0) (0,0)};
    \addplot[densely dotted,mark = -, mark options = solid, mycolor2,shift={(0,0)}] coordinates  {(1,0) (1,0.33)};
    \addplot[densely dotted,mark = -, mark options = solid, mycolor2,shift={(0,0)}] coordinates  {(2,0) (2,1)};
    \addplot[densely dotted,mark = -, mark options = solid, mycolor2,shift={(0,0)}] coordinates  {(3,0) (3,1)};
    \addplot[densely dotted,mark = -, mark options = solid, mycolor2,shift={(0,0)}] coordinates  {(4,0) (4,1)};
    \addplot[dotted,mark = -, mark options = solid, mycolor2,shift={(0,0)}] coordinates  {(5,0) (5,1)};
    \addplot[dotted,mark = -, mark options = solid, mycolor2,shift={(0,0)}] coordinates  {(6,0) (6,1)};
    \addplot[dotted,mark = -, mark options = solid, mycolor2,shift={(0,0)}] coordinates  {(7,0) (7,1)};
    \addplot[dotted,mark = -, mark options = solid, mycolor2,shift={(0,0)}] coordinates  {(8,0) (8,1)};
    \addplot[dotted,mark = -, mark options = solid, mycolor2,shift={(0,0)}] coordinates  {(9,0) (9,1)};
    \addplot[dotted,mark = -, mark options = solid, mycolor2,shift={(0,0)}] coordinates  {(10,0) (10,1)};
    \addplot[dotted,mark = -, mark options = solid, mycolor2,shift={(0,0)}] coordinates  {(11,0) (11,1)};
    \addplot[dotted,mark = -, mark options = solid, mycolor2,shift={(0,0)}] coordinates  {(12,0) (12,1)};    
    \addplot[dotted,mark = -, mark options = solid, mycolor2,shift={(0,0)}] coordinates  {(13,0) (13,1)};
    \addplot[dotted,mark = -, mark options = solid, mycolor2,shift={(0,0)}] coordinates  {(14,0.67) (14,1)};
    \addplot[dotted,mark = -, mark options = solid, mycolor2,shift={(0,0)}] coordinates  {(15,1) (15,1)};

    \addplot[mark=x, mycolor3,shift={(0,0)}] coordinates  {(0,0) (0,0)};
    \addplot[mark=x, mycolor3,shift={(0,0)}] coordinates  {(1,0.33) (1,0.33)};
    \addplot[mark=x, mycolor3,shift={(0,0)}] coordinates  {(2,0) (2,0)};
    \addplot[mark=x, mycolor3,shift={(0,0)}] coordinates  {(3,0.25) (3,0.25)};
    \addplot[mark=x, mycolor3,shift={(0,0)}] coordinates  {(4,0.42) (4,0.42)};
    \addplot[mark=x, mycolor3,shift={(0,0)}] coordinates  {(5,0.33) (5,0.33)};
    \addplot[mark=x, mycolor3,shift={(0,0)}] coordinates  {(6,0.58) (6,0.58)};
    \addplot[mark=x, mycolor3,shift={(0,0)}] coordinates  {(7,0.75) (7,0.75)};
    \addplot[mark=x, mycolor3,shift={(0,0)}] coordinates  {(8,0.67) (8,0.67)};
    \addplot[mark=x, mycolor3,shift={(0,0)}] coordinates  {(9,0.42) (9,0.42)};
    \addplot[mark=x, mycolor3,shift={(0,0)}] coordinates  {(10,0.67) (10,0.67)};
    \addplot[mark=x, mycolor3,shift={(0,0)}] coordinates  {(11,0.58) (11,0.58)};
    \addplot[mark=x, mycolor3,shift={(0,0)}] coordinates  {(12,0.75) (12,0.75)};    
    \addplot[mark=x, mycolor3,shift={(0,0)}] coordinates  {(13,1) (13,1)};
    \addplot[mark=x, mycolor3,shift={(0,0)}] coordinates  {(14,0.67) (14,0.67)};
    \addplot[mark=x, mycolor3,shift={(0,0)}] coordinates  {(15,1) (15,1)};
       \end{axis}
    \end{tikzpicture}\vspace{-0.6cm}%
    \caption{} \label{chp7:fig:combination:2}
\end{subfigure} 
\caption{The upper and lower probabilistic bounds inferred using the termination-analysis are shown by green dotted lines and the correct probabilities are shown by the orange `$\times$'. }\vspace{-0.1cm}
\label{chp7:fig:nontermNsign:2}
\end{figure}
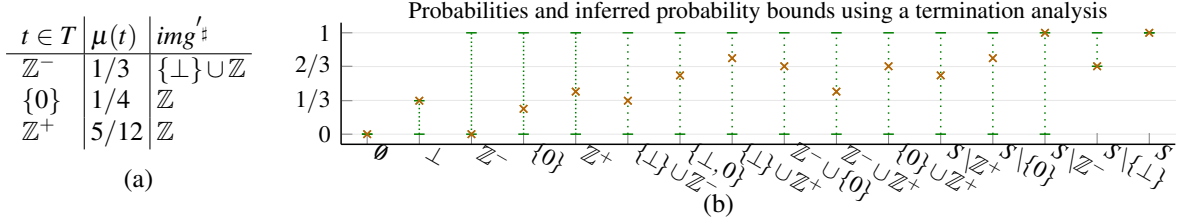
\FloatBarrier
\paragraph{Combined analyses.}
Instead of combining the resulting probabilistic bounds from the two analyses, we have proposed to combine the analyses image-overapproximating functions using Lemma~\ref{chp7:lem:combineimg} and afterwards infer the probability bounds based on this combined result using, \eg, the formulas of Theorem~\ref{chp7:thm:simpleBounds}. 
Given two image-over-approximating functions $\img^{\sharp}$ (see Table~\ref{chp7:fig:nontermNsign:table:1}) and $\img'^{\sharp}$ (see Table~\ref{chp7:fig:nontermNsign:table:2}), we use the formula of Lemma~\ref{chp7:lem:combineimg} to define $\img''^{\sharp}$ over the partition elements as follows.
 $$
 \begin{array}{lclr@{\;}ll}
 \img''^{\sharp}(\mathbb{Z}^{-}) = & \img^{\sharp}(\mathbb{Z}^{-}) \cap   \img'^{\sharp}(\mathbb{Z}^{-}) & = & \left(\{\bot\} \cup \mathbb{Z}^{-} \right) & \cap  \;
 \left( \{\bot\}\cup \mathbb{Z} \right) & = \{\bot\} \cup \mathbb{Z}^{-} \\
  \img''^{\sharp}(\{0\}) = & \img^{\sharp}(\{0\}) \cap   \img'^{\sharp}(\{0\}) & = & \{\bot,0\} & \cap \; \mathbb{Z} & = \{0\} \\
  \img''^{\sharp}(\mathbb{Z}^{+}) = & \img^{\sharp}(\mathbb{Z}^{+}) \cap   \img'^{\sharp}(\mathbb{Z}^{+}) & = &\left(\{\bot,0\} \cup \mathbb{Z}^{+} \right) &\cap \; 
\mathbb{Z} & = \{0\} \cup \mathbb{Z}^{+} \\
 \end{array}$$
Again, we infer probability bounds based on the formulas of Theorem~\ref{chp7:thm:simpleBounds} using the new image-over\-approximating function $\img''^{\sharp}$, e.g., see the following example.
\begin{example}\label{ex:sign+term:low+up}
The procedure is similar to those in Examples~\ref{ex:sign:low+up} and~\ref{ex:term:low+up}; here, we use the function $\img''^{\sharp}$ to derive upper and lower probability bounds of the output event $\{0\}$. 
$$\begin{array}{lrl}
\mu\supsharp_{|\texttt{sum}|}(\{0\}) = & \sum_{t\in T, \img''^{\sharp}(t) \cap \{0\} \neq \emptyset}\mu(t) 
 & = \sum_{t\in \{\{0\},\mathbb{Z}^{+}\}}\mu(t) 
 =  \mu(\{0\}) + \mu(\mathbb{Z}^{+}) = 1/4 + 5/12 = 2/3 \\
\mu\supflat_{|\texttt{sum}|}(\{0\}) =& \sum_{t\in T, \img''^{\sharp}(t) \subseteq \{0\}}\mu(t) & = 
\sum_{t\in \{\{0\}\}}\mu(t) = \mu(\{0\}) = 1/4
\end{array}
$$
The lower probability bound of the output event $\{0\}$ is $1/4$ and its upper probability bound is $2/3$; in comparison the correct probability of $\{0\}$ is $1/4$. Note that both the previous probabilistic analyses produced 0 as the lower probability bound.
\end{example}
The improved probability bounds are displayed in Figure~\ref{chp7:fig:combination} (solid black) together with the previous results. These combined results are more precise than if we had simply used the minimum and maximum of the bounds of the individual analyses; for instance the upper and lower probability bounds of $\{0\}$ would have been $2/3$ and $0$ and with the suggested method the lower bound is improved to $1/3$. As expected from part~\ref{chp:thm7:II} of Theorem~\ref{chp7:thm:forward}, \ie, $ \mu_f\supflat(A) = 1 - \mu_f\supsharp(\stcompc{A})$, the improvement of the lower probability bound of $\{0\}$ influences the upper probability bound of its complement set, \ie, $S\setminus\!\{0\}$, \ie, $\mu_{|\texttt{sum}|}(S\setminus\!\{0\})$ is reduced from $1$ to $2/3$.

\begin{figure}[h] \centering \vspace{-0.4cm}
\begin{tikzpicture}
      \begin{axis}[%
        height=5 cm,
        align =center, title style={yshift=-10pt,font=\footnotesize},
        title={Probabilities and inferred probability bounds},
        xtick={-1,0,1,2,3,4,5,6,7,8,9,10,11,12,13,14,15,16},
        scaled ticks=false,
        ytick={0, 0.33,0.67, 1},
        y tick label style={font=\footnotesize},
        yticklabels={$0$,$1/3$,$2/3$,$1$},
        ymin=-0.1,
        ymax=1.1,
        xmin=-0.5,
        xmax=15.5,
        width=0.95\textwidth,
        axis y line*=left,
        axis x line*=bottom,
        line width=0.2mm,
        x tick label style={rotate=-30,anchor=west,font=\footnotesize},
                xticklabels={~,
                $\emptyset$, 
                $ \bot      $,
                $\mathbb{Z}^{-}$, 
                $\{0\}  $,        
                $\mathbb{Z}^{+} $, 
                $\{\!\bot\!\} \cup \mathbb{Z}^{-}$, 
                ${\{\bot, 0\}   }      $, 
                $\{\!\bot\!\} \cup \mathbb{Z}^{+}$, 
                $\mathbb{Z}^{-} \cup \{0\} $,   
                $\mathbb{Z}^{-} \cup \mathbb{Z}^{+}$, 
                $\{0\} \cup \mathbb{Z}^{+}     $,     
                $S\setminus\!\!\mathbb{Z}^{+}$,
                $S\setminus\!\! \{0\}$, 
                $S\setminus\!\!\mathbb{Z}^{-}$,
                $S\setminus\!\!\{\bot\}$, 
                $S$, 
                \empty}                             
      ]
\addplot[very thin,gray,opacity=0.2] coordinates {(-1,1) (16,1)};      
\addplot[very thin,gray,opacity=0.2] coordinates {(-1,0.66666) (16,0.66666)};      
\addplot[very thin,gray,opacity=0.2] coordinates {(-1,0.33333) (16,0.33333)};
\addplot[very thin,gray,opacity=0.2] coordinates {(-1,0) (16,0)};

      \addplot[ mark options = solid,mark = -, dashed, mycolor!60!white,shift={(-3,0)}] coordinates {(0,0) (0,0)};
      \addplot[ mark options = solid,mark = -, dashed, mycolor!60!white,shift={(-3,0)}] coordinates {(1,0) (1,1)};
      \addplot[ mark options = solid,mark = -, dashed, mycolor!60!white,shift={(-3,0)}] coordinates {(2,0) (2,0.33)};
      \addplot[ mark options = solid,mark = -, dashed, mycolor!60!white,shift={(-3,0)}] coordinates {(3,0) (3,0.67)};
      \addplot[ mark options = solid,mark = -, dashed, mycolor!60!white,shift={(-3,0)}] coordinates {(4,0) (4,0.42)};
      \addplot[ mark options = solid,mark = -, dashed, mycolor!60!white,shift={(-3,0)}] coordinates {(5,0.33) (5,1)};
      \addplot[ mark options = solid,mark = -, dashed, mycolor!60!white,shift={(-3,0)}] coordinates {(6,0.25) (6,1)};
      \addplot[ mark options = solid,mark = -, dashed, mycolor!60!white,shift={(-3,0)}] coordinates {(7,0) (7,1)};
      \addplot[ mark options = solid,mark = -, dashed, mycolor!60!white,shift={(-3,0)}] coordinates {(8,0) (8,1)};
      \addplot[ mark options = solid,mark = -, dashed, mycolor!60!white,shift={(-3,0)}] coordinates {(9,0) (9,0.75)};
      \addplot[ mark options = solid,mark = -, dashed, mycolor!60!white,shift={(-3,0)}] coordinates {(10,0) (10,0.67)};
      \addplot[ mark options = solid,mark = -, dashed, mycolor!60!white,shift={(-3,0)}] coordinates {(11,0.58) (11,1)};
      \addplot[ mark options = solid,mark = -, dashed, mycolor!60!white,shift={(-3,0)}] coordinates {(12,0.33) (12,1)};
      \addplot[ mark options = solid,mark = -, dashed, mycolor!60!white,shift={(-3,0)}] coordinates {(13,0.67) (13,1)};
      \addplot[ mark options = solid,mark = -, dashed, mycolor!60!white,shift={(-3,0)}] coordinates {(14,0) (14,1)};
      \addplot[ mark options = solid,mark = -, dashed, mycolor!60!white,shift={(-3,0)}] coordinates {(15,1) (15,1)};   
 
 \addplot[dotted,mark = -, mark options = solid, mycolor2,shift={(-1.5,0)}] coordinates  {(0,0) (0,0)};
    \addplot[densely dotted,mark = -, mark options = solid, mycolor2,shift={(-1.5,0)}] coordinates  {(1,0) (1,0.33)};
    \addplot[densely dotted,mark = -, mark options = solid, mycolor2,shift={(-1.5,0)}] coordinates  {(2,0) (2,1)};
    \addplot[densely dotted,mark = -, mark options = solid, mycolor2,shift={(-1.5,0)}] coordinates  {(3,0) (3,1)};
    \addplot[densely dotted,mark = -, mark options = solid, mycolor2,shift={(-1.5,0)}] coordinates  {(4,0) (4,1)};
    \addplot[dotted,mark = -, mark options = solid, mycolor2,shift={(-1.5,0)}] coordinates  {(5,0) (5,1)};
    \addplot[dotted,mark = -, mark options = solid, mycolor2,shift={(-1.5,0)}] coordinates  {(6,0) (6,1)};
    \addplot[dotted,mark = -, mark options = solid, mycolor2,shift={(-1.5,0)}] coordinates  {(7,0) (7,1)};
    \addplot[dotted,mark = -, mark options = solid, mycolor2,shift={(-1.5,0)}] coordinates  {(8,0) (8,1)};
    \addplot[dotted,mark = -, mark options = solid, mycolor2,shift={(-1.5,0)}] coordinates  {(9,0) (9,1)};
    \addplot[dotted,mark = -, mark options = solid, mycolor2,shift={(-1.5,0)}] coordinates  {(10,0) (10,1)};
    \addplot[dotted,mark = -, mark options = solid, mycolor2,shift={(-1.5,0)}] coordinates  {(11,0) (11,1)};
    \addplot[dotted,mark = -, mark options = solid, mycolor2,shift={(-1.5,0)}] coordinates  {(12,0) (12,1)};    
    \addplot[dotted,mark = -, mark options = solid, mycolor2,shift={(-1.5,0)}] coordinates  {(13,0) (13,1)};
    \addplot[dotted,mark = -, mark options = solid, mycolor2,shift={(-1.5,0)}] coordinates  {(14,0.67) (14,1)};
    \addplot[dotted,mark = -, mark options = solid, mycolor2,shift={(-1.5,0)}] coordinates  {(15,1) (15,1)};
 
      \addplot[solid,mark = -, mark options = solid, black,shift={(0,0)}] coordinates  {(0,0) (0,0)};
      \addplot[solid,mark = -, mark options = solid,  black,shift={(0,0)}] coordinates {(1,0) (1, 0.33)};
      \addplot[solid,mark = -, mark options = solid,  black,shift={(0,0)}] coordinates  {(2, 0) (2, 0.33)};
      \addplot[solid,mark = -, mark options = solid,  black,shift={(0,0)}] coordinates  {(3, 0.25) (3, 0.67)}; 
      \addplot[solid,mark = -, mark options = solid,  black,shift={(0,0)}] coordinates  {(4, 0)    (4, 0.42)};
      \addplot[solid,mark = -, mark options = solid,  black,shift={(0,0)}] coordinates  {(5, 0.33) (5, 0.33)};
      \addplot[solid,mark = -, mark options = solid,  black,shift={(0,0)}] coordinates  {(6, 0.25) (6, 1)};
      \addplot[solid,mark = -, mark options = solid,  black,shift={(0,0)}] coordinates  {(7, 0) (7,0.75)};
      \addplot[solid,mark = -, mark options = solid,  black,shift={(0,0)}] coordinates  {(8, 0) (8, 1)};
      \addplot[solid,mark = -, mark options = solid,  black,shift={(0,0)}] coordinates {(9,0) (9, 0.75)};
      \addplot[solid,mark = -, mark options = solid,  black,shift={(0,0)}] coordinates {(10, 0.67) (10, 0.67)};
      \addplot[solid,mark = -, mark options = solid,  black,shift={(0,0)}] coordinates {(11, 0.58) (11, 1)};
      \addplot[solid,mark = -, mark options = solid,  black,shift={(0,0)}] coordinates {(12, 0.33) (12, 0.75)};
      \addplot[solid,mark = -, mark options = solid,  black,shift={(0,0)}] coordinates {(13, 0.67) (13, 1)};
      \addplot[solid,mark = -, mark options = solid,  black,shift={(0,0)}] coordinates {(14, 0.67) (14, 1)};
      \addplot[solid,mark = -, mark options = solid,  black,shift={(0,0)}] coordinates {(15,1) (15,1)};

    \addplot[mark=x, mycolor3,shift={(0,0)}] coordinates  {(0,0) (0,0)};
    \addplot[mark=x, mycolor3,shift={(0,0)}] coordinates  {(1,0.33) (1,0.33)};
    \addplot[mark=x, mycolor3,shift={(0,0)}] coordinates  {(2,0) (2,0)};
    \addplot[mark=x, mycolor3,shift={(0,0)}] coordinates  {(3,0.25) (3,0.25)};
    \addplot[mark=x, mycolor3,shift={(0,0)}] coordinates  {(4,0.42) (4,0.42)};
    \addplot[mark=x, mycolor3,shift={(0,0)}] coordinates  {(5,0.33) (5,0.33)};
    \addplot[mark=x, mycolor3,shift={(0,0)}] coordinates  {(6,0.58) (6,0.58)};
    \addplot[mark=x, mycolor3,shift={(0,0)}] coordinates  {(7,0.75) (7,0.75)};
    \addplot[mark=x, mycolor3,shift={(0,0)}] coordinates  {(8,0.67) (8,0.67)};
    \addplot[mark=x, mycolor3,shift={(0,0)}] coordinates  {(9,0.42) (9,0.42)};
    \addplot[mark=x, mycolor3,shift={(0,0)}] coordinates  {(10,0.67) (10,0.67)};
    \addplot[mark=x, mycolor3,shift={(0,0)}] coordinates  {(11,0.58) (11,0.58)};
    \addplot[mark=x, mycolor3,shift={(0,0)}] coordinates  {(12,0.75) (12,0.75)};    
    \addplot[mark=x, mycolor3,shift={(0,0)}] coordinates  {(13,1) (13,1)};
    \addplot[mark=x, mycolor3,shift={(0,0)}] coordinates  {(14,0.67) (14,0.67)};
    \addplot[mark=x, mycolor3,shift={(0,0)}] coordinates  {(15,1) (15,1)};
       \end{axis}
    \end{tikzpicture}\vspace{-0.6cm}%
    \caption{} \label{chp7:fig:combination}
\caption{Probability bounds inferred by $\img^\sharp$ (using the sign-analysis) are shown as blue dashed lines, those inferred by $\img'^\sharp$ (using the termination-analysis) are shown as green dotted lines, and those inferred by their combined image-overapproximating function $\img''^\sharp$ are shown as solid black lines. For comparison, the correct probability distribution is indicated by orange `$\times$'.}
\label{chp7:fig:nontermNsign}
\end{figure}
\FloatBarrier
\subsection{Interval analyses}\label{sec:casestudyinterval}
%
We study the following simple programs \texttt{f} and \texttt{g} and compare our results with those by Monniaux; the first program \texttt{f} has no branching points, but the second program \texttt{g} do.\vspace{0.2cm} \\ \vspace{0.2cm} 
\begin{minipage}{0.45\textwidth}\small
\begin{Verbatim}[frame=none,baselinestretch=1]
 double f(double x1, x2, x3, x4){ 
   double x; x = 0.0;
   x = x+ x1*2.0-1.0;
   x = x+ x2*2.0-1.0;
   x = x+ x3*2.0-1.0;
   x = x+ x4*2.0-1.0;
   return x; }    
\end{Verbatim}
\end{minipage}
\begin{minipage}{0.52\textwidth}\small
%
\begin{Verbatim}[frame=none,baselinestretch=1]
double g(double x1, x2, x3, x4, x5){ 
  double x; x = 0.0;  (*)
  if (x5 >= 0.5) x = x+ x1*2.0-1.0; (**)
  x = x+ x2*2.0-1.0;
  x = x+ x3*2.0-1.0;
  x = x+ x4*2.0-1.0;  (***)
  return x;} 
\end{Verbatim} 
\end{minipage}\vspace{0.2cm}\\
These programs are deterministic versions of two programs similar to those analysed by Monniaux's experimental analysis\footnote{Personal communication with D.~Monniaux.}~\cite{Monniaux2000}; however, instead of having random generators in the program, here, the random generated inputs are given as program input.
We analyse the programs using a black-box interval analysis corresponding to that lifted by Monniaux~\cite{Monniaux2000} to handle probabilistic programs; furthermore,
to ensure a fair comparison we use an input-partition that corresponds to the abstraction used by Monniaux. 

For both programs we infer probability bounds for the output events $\{[\texttt{-}4,\texttt{-}3],[\texttt{-}3,\texttt{-}2],\ldots,[3,4]\}$, given an input probability measure where
the input arguments are independent and each is uniformly distributed between 0 and 1.
Since the interval analysis yields partially correct results, stating nothing about non-termination, we safely assume to include non-termination as part of its output.

\paragraph{Program \texttt{f}.}
The program \texttt{f} return the sum four input variables. Because the interval analysis $\img^{\sharp}_{|\texttt{f}|}$ is a forward analysis, we use the formulas of Theorem~\ref{chp7:thm:simpleBounds} and let the input partition $T$ be the cartesian product $T= I_{1/10}\times I_{1/10}\times I_{1/10}\times I_{1/10}$ where $I_{1/10}=\{[0,\frac{1}{10}],\ldots, [\frac{9}{10},1]\}$, \eg, see following example. 
\begin{example}\label{ex:f:img-examples}
The interval analysis provides an image-overapproximating function $\img^{\sharp}_{|\texttt{f}}|$ where the following are three examples.
$$
\renewcommand*{\arraystretch}{1.2}
\begin{array}{@{\img^{\sharp}_{|\texttt{f}|}(}c@{,}c@{,}c@{,}c@{)\;=\;}c}
{[0,\frac{1}{10}]} & {[0,\frac{1}{10}]} & {[0,\frac{1}{10}]} & {[0,\frac{1}{10}]} & \{{[-4,-\frac{32}{10}]}, \bot\}\\
{[0,\frac{1}{10}]} & {[0,\frac{1}{10}]} & {[0,\frac{1}{10}]} & {[\frac{1}{10},\frac{2}{10}]} & \{{[-\frac{38}{10},-\frac{30}{10}]}, \bot\}\\ 
{[0,\frac{1}{10}]} & {[0,\frac{1}{10}]} & {[0,\frac{1}{10}]} & {[\frac{2}{10},\frac{3}{10}]} & \{{[-\frac{36}{10},-\frac{28}{10}]}, \bot\}
\end{array}$$
\end{example}
\noindent There are $10'000$ partition elements, and due to the uniform probability measure each of the partition elements has probability $1/10'000$. We use the formulas from Theorem~\ref{chp7:thm:simpleBounds} to derive upper and lower probability bounds of the output events, \eg, see following example. The inferred probability bounds are shown as black solid lines in Figure~\ref{chp7:fig:monniaux:f:res} and the similar results obtained using Monniaux's experimental analysis~\cite{Monniaux2000} are depicted by the blue dotted lines. 
\begin{example}\label{ex:f:up+low(-4-3)}
We use the formulas from Theorem~\ref{chp7:thm:simpleBounds} to calculate the upper and lower probability bounds of the output event $[-4,-3]$.
There are 70 input partition elements whose over-approximated images, \ie, $\img^{\sharp}_{|\texttt{f}|}$, overlap with the output event $[-4,-3]$; we will refrain from specifying them.
Since all the over-approximated images contain the element $\bot$ then there are 0 input partition elements whose over-approximated images are subsets of $[-4,-3]$.
 The input probability of each element is $1/10'000$ which simplifies the calculations.
$$\begin{array}{lrl}
\mu\supsharp_{|\texttt{f}|}([-4,-3]) = & \sum_{t\in T, \img_{|\texttt{f}|}^{\sharp}(t) \cap [-4,-3] \neq \emptyset}\mu(t) 
 & 
 = 70\cdot \frac{1}{10000} = \frac{7}{1000} \approx 0.007\\
\mu\supflat_{|\texttt{f}|}([-4,-3]) = & \sum_{t\in T, \img''^{\sharp}(t) \subseteq [-4,-3]}\mu(t) & = 
\sum_{t\in \emptyset} \mu(t)  = 0
\end{array}
$$
In comparison the correct probability of the output event $[-4,-3]$ is $1/384\approx 0.00260417\!$ (verified in Mathematica).
\end{example}
\begin{figure}[hb]
\begin{subfigure}{0.5\textwidth}
\begin{tikzpicture}
      \begin{axis}[%
        height=5cm,
        align =center, title style={yshift=-13pt,font=\footnotesize},
        title={Probabilities and inferred probability bounds \\ (program \texttt{f})},
        xtick={1,2,3,4,5,6,7,8},
        scaled ticks=false,
        ytick={0, 25, 50},
        yticklabels={$0$,$1/4$, $1/2$},
        ymin=-0.1,
        ymax=60,
        xmin=0.5,
        xmax=9,
        width=1.1\textwidth,
        axis y line*=left,
        axis x line*=bottom,
        line width=0.2mm,
        x tick label style={rotate=-50,anchor=west,font=\footnotesize},
                xticklabels={
                ${[-4,-3]}$,
                ${[-3,-2]}$, 
                ${[-2,-1]}$,        
                ${[-1,0]}$, 
                ${[0,1]}$, 
                ${[1,2]}$, 
                ${[2,3]}$, 
                ${[3,4]}$,   
                \empty}                             
      ]

\addplot[very thin,gray,opacity=0.2] coordinates {(-1,50) (10,50)};      
\addplot[very thin,gray,opacity=0.2] coordinates {(-1,25) (10,25)};
\addplot[very thin,gray,opacity=0.2] coordinates {(-1,0) (10,0)};      

      \addplot[ mark options = solid,mark = -, densely dotted, mycolor!60!white,shift={(6,0)}] coordinates {(1,0) (1,0.7)};
      \addplot[ mark options = solid,mark = -, densely dotted, mycolor!60!white,shift={(6,0)}] coordinates {(2,0) (2,7.1)};
      \addplot[ mark options = solid,mark = -, densely dotted, mycolor!60!white,shift={(6,0)}] coordinates {(3,0) (3,25.1)};
      \addplot[ mark options = solid,mark = -, densely dotted, mycolor!60!white,shift={(6,0)}] coordinates {(4,0) (4,46.5)};
      \addplot[ mark options = solid,mark = -, densely dotted, mycolor!60!white,shift={(6,0)}] coordinates {(5,0) (5,46.5)};
      \addplot[ mark options = solid,mark = -, densely dotted, mycolor!60!white,shift={(6,0)}] coordinates {(6,0) (6,25.1)};
      \addplot[ mark options = solid,mark = -, densely dotted, mycolor!60!white,shift={(6,0)}] coordinates {(7,0) (7,7.1)};
      \addplot[ mark options = solid,mark = -, densely dotted, mycolor!60!white,shift={(6,0)}] coordinates {(8,0) (8,0.7)};
      %
    \addplot[solid, mark = -, mark options = solid, black,shift={(0,0)}] coordinates  {(1,0) (1,0.7)};
    \addplot[solid, mark = -, mark options = solid, black,shift={(0,0)}] coordinates  {(2,0) (2,7.1)};
    \addplot[solid, mark = -, mark options = solid, black,shift={(0,0)}] coordinates  {(3,0) (3,25.1)};
    \addplot[solid, mark = -, mark options = solid, black,shift={(0,0)}] coordinates  {(4,0) (4,46.5)};
    \addplot[solid, mark = -, mark options = solid, black,shift={(0,0)}] coordinates  {(5,0) (5,46.5)};
    \addplot[solid, mark = -, mark options = solid, black,shift={(0,0)}] coordinates  {(6,0) (6,25.1)};
    \addplot[solid, mark = -, mark options = solid, black,shift={(0,0)}] coordinates  {(7,0) (7,7.1)};
    \addplot[solid, mark = -, mark options = solid, black,shift={(0,0)}] coordinates  {(8,0) (8,0.7)};
%
    \addplot[dashed,mark = -, mark options = solid, black,shift={(3,0)}] coordinates  {(1,0.1) (1,0.7)};
    \addplot[dashed,mark = -, mark options = solid, black,shift={(3,0)}] coordinates  {(2,1.4) (2,7.1)};
    \addplot[dashed,mark = -, mark options = solid, black,shift={(3,0)}] coordinates  {(3,6.3) (3,25.1)};
    \addplot[dashed,mark = -, mark options = solid, black,shift={(3,0)}] coordinates  {(4,12.3) (4,46.5)};
    \addplot[dashed,mark = -, mark options = solid, black,shift={(3,0)}] coordinates  {(5,12.3) (5,46.5)};
    \addplot[dashed,mark = -, mark options = solid, black,shift={(3,0)}] coordinates  {(6,6.3) (6,25.1)};
    \addplot[dashed,mark = -, mark options = solid, black,shift={(3,0)}] coordinates  {(7,1.4) (7,7.1)};
    \addplot[dashed,mark = -, mark options = solid, black,shift={(3,0)}] coordinates  {(8,0.1) (8,0.7)};
%
    \addplot[mark=x, mycolor3,shift={(3,0)}] coordinates {(1, 0.1) (1, 0.1)};
    \addplot[mark=x, mycolor3,shift={(3,0)}] coordinates {(2, 03)  (2, 03)};
    \addplot[mark=x, mycolor3,shift={(3,0)}] coordinates {(3, 15) (3,15)};
    \addplot[mark=x, mycolor3,shift={(3,0)}] coordinates {(4, 32) (4, 32)};
    \addplot[mark=x, mycolor3,shift={(3,0)}] coordinates {(5, 32) (5, 32)};
    \addplot[mark=x, mycolor3,shift={(3,0)}] coordinates {(6, 15) (6, 15)};
    \addplot[mark=x, mycolor3,shift={(3,0)}] coordinates {(7, 03) (7, 03)};
    \addplot[mark=x, mycolor3,shift={(3,0)}] coordinates {(8,0.1) (8,0.1)};
       \end{axis}
    \end{tikzpicture}%
\vspace{-0.5cm}\caption{}\label{chp7:fig:monniaux:f:res}
\end{subfigure}
\begin{subfigure}{0.5\textwidth}
    \noindent\begin{tikzpicture}
      \begin{axis}[%
        height=5 cm,
        align =center, title style={yshift=-13pt,  font=\footnotesize},
        title={Probabilities and inferred probability bounds \\ (program \texttt{g})},
        xtick={1,2,3,4,5,6,7,8},
        scaled ticks=false,
        ytick={0, 33, 66, 100},
        yticklabels={$0$,$1/3$,$2/3$,$1$},
        ymin=-0.1,
        ymax=120,
        xmin=0.5,
        xmax=8.5,
        width=\textwidth,
        axis y line*=left,
        axis x line*=bottom,
        line width=0.2mm,
        x tick label style={rotate=-50,anchor=west,font=\footnotesize},
                xticklabels={
                ${[-4,-3]}$,
                ${[-3,-2]}$, 
                ${[-2,-1]}$,        
                ${[-1,0]}$, 
                ${[0,1]}$, 
                ${[1,2]}$, 
                ${[2,3]}$, 
                ${[3,4]}$
                }                             
      ]

\addplot[very thin,gray,opacity=0.2] coordinates {(-1,100) (10,100)};      
\addplot[very thin,gray,opacity=0.2] coordinates {(-1,66.666) (10,66.666)};      
\addplot[very thin,gray,opacity=0.2] coordinates {(-1,33.333) (10,33.333)};
\addplot[very thin,gray,opacity=0.2] coordinates {(-1,0) (10,0)};      

      \addplot[ mark options = solid,mark = -, densely dotted, mycolor!60!white,shift={(1.5,0)}] coordinates {(0,0) (0,0)};
      \addplot[ mark options = solid,mark = -, densely dotted, mycolor!60!white,shift={(1.5,0)}] coordinates {(1,0) (1,4)};
      \addplot[ mark options = solid,mark = -, densely dotted, mycolor!60!white,shift={(1.5,0)}] coordinates {(2,0) (2,22)};
      \addplot[ mark options = solid,mark = -, densely dotted, mycolor!60!white,shift={(1.5,0)}] coordinates {(3,0) (3,66)};
      \addplot[ mark options = solid,mark = -, densely dotted, mycolor!60!white,shift={(1.5,0)}] coordinates {(4,0) (4,108)};
      \addplot[ mark options = solid,mark = -, densely dotted, mycolor!60!white,shift={(1.5,0)}] coordinates {(5,0) (5,108)};
      \addplot[ mark options = solid,mark = -, densely dotted, mycolor!60!white,shift={(1.5,0)}] coordinates {(6,0) (6,66)};
      \addplot[ mark options = solid,mark = -, densely dotted, mycolor!60!white,shift={(1.5,0)}] coordinates {(7,0) (7,22)};
      \addplot[ mark options = solid,mark = -, densely dotted, mycolor!60!white,shift={(1.5,0)}] coordinates {(8,0) (8,4)};
      \addplot[ mark options = solid,mark = -, densely dotted, mycolor!60!white,shift={(1.5,0)}] coordinates {(9,0) (9,0)};
     
 \addplot[solid,mark = -, mark options = solid, black,shift={(-1.5,0)}] coordinates  {(0,0) (0,0)};
    \addplot[solid,mark = -, mark options = solid, black,shift={(-1.5,0)}] coordinates  {(1,0) (1,4)};
    \addplot[solid,mark = -, mark options = solid, black,shift={(-1.5,0)}] coordinates  {(2,0) (2,19)};
    \addplot[solid,mark = -, mark options = solid, black,shift={(-1.5,0)}] coordinates  {(3,0) (3,53)};
    \addplot[solid,mark = -, mark options = solid, black,shift={(-1.5,0)}] coordinates  {(4,0) (4,83)};
    \addplot[solid,mark = -, mark options = solid, black,shift={(-1.5,0)}] coordinates  {(5,0) (5,83)};
    \addplot[solid,mark = -, mark options = solid, black,shift={(-1.5,0)}] coordinates  {(6,0) (6,53)};
    \addplot[solid,mark = -, mark options = solid, black,shift={(-1.5,0)}] coordinates  {(7,0) (7,19)};
    \addplot[solid,mark = -, mark options = solid, black,shift={(-1.5,0)}] coordinates  {(8,0) (8,4)};
    \addplot[solid,mark = -, mark options = solid, black,shift={(-1.5,0)}] coordinates  {(9,0) (9,0)};
    \addplot[mark=x, mycolor3,shift={(-1.5,0)}] coordinates {(0, 0)   (0, 0)};
    \addplot[mark=x, mycolor3,shift={(-1.5,0)}] coordinates {(1, 0.1) (1, 0.1)};
    \addplot[mark=x, mycolor3,shift={(-1.5,0)}] coordinates {(2, 03)  (2, 03)};
    \addplot[mark=x, mycolor3,shift={(-1.5,0)}] coordinates {(3, 15) (3,15)};
    \addplot[mark=x, mycolor3,shift={(-1.5,0)}] coordinates {(4, 32) (4, 32)};
    \addplot[mark=x, mycolor3,shift={(-1.5,0)}] coordinates {(5, 32) (5, 32)};
    \addplot[mark=x, mycolor3,shift={(-1.5,0)}] coordinates {(6, 15) (6, 15)};
    \addplot[mark=x, mycolor3,shift={(-1.5,0)}] coordinates {(7, 03) (7, 03)};
    \addplot[mark=x, mycolor3,shift={(-1.5,0)}] coordinates {(8,0.1) (8,0.1)};
    \addplot[mark=x, mycolor3,shift={(-1.5,0)}] coordinates {(9,0) (9,0)};
       \end{axis}
    \end{tikzpicture}
    \vspace{-0.4cm}\caption{}\label{chp7:fig:monniaux:g:res}
\end{subfigure}   \vspace{-0.4cm} 
    \caption{The black solid lines are the inferred probability bounds obtained using an interval analysis. In (\subref{chp7:fig:monniaux:f:res}), the dashed black lines show the improved inferred probability bounds based on combining the interval-analysis results with termination-analysis results. The blue dotted lines indicates the upper probability bounds for the output events derived using Monniaux's example analysis~\cite{Monniaux2000}.} \vspace{-0.3cm} 
    \label{chp7:fig:monniaux:1}
\end{figure}
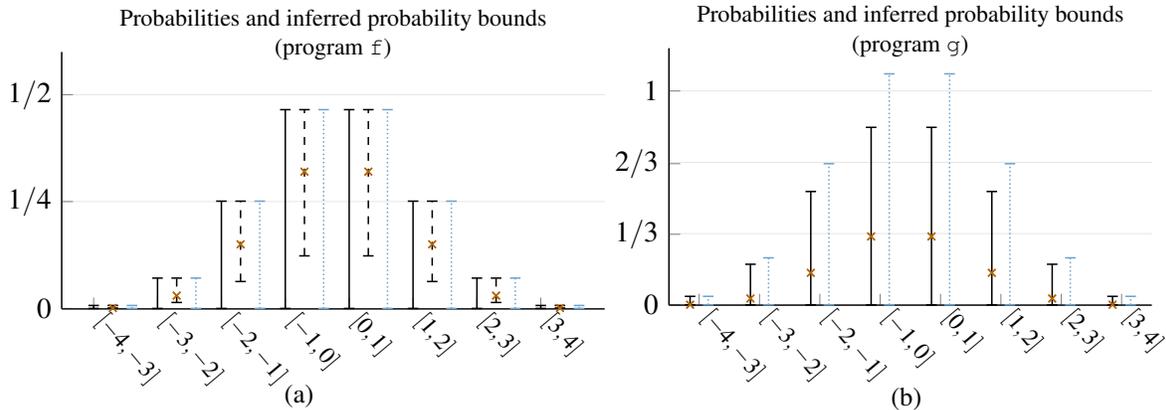
\noindent \emph{Combined analyses.} As we saw in the sign-analysis example (Section~\ref{sec:sign}), combining the analysis with the results of a termination analysis may improve the results. This is also the case in this example as shown by the black dotted lines in Figure~\ref{chp7:fig:monniaux:f:res}, e.g., see following example.
\begin{example}
As for $\img^{\sharp}_{|\texttt{f}|}$ the same 70 partition elements overlaps with the output event $[-4,-3]$. 
However, there are 5 input partition elements for which $\img'^{\sharp}_{|\texttt{f}|}$ are subsets of $[-4,-3]$, namely 
$T' = \{(i,i,i,i), (j,i,i,i), (i,j,i,i), 
(i,i,j,i), (i,i,i,j)\}$ where $i=[0,\frac{1}{10}]$ and  $j = [\frac{1}{10},\frac{2}{10}]\}$, as the first two calculations   in Example~\ref{ex:f:img-examples} indicate.
 The input probability of each element is $1/10'000$ which simplifies the calculations.
$$\begin{array}{lrl}
\mu\supsharp_{|\texttt{f}|}([-4,-3]) = & \sum_{t\in T, \img_{|\texttt{f}|}^{\sharp}(t) \cap [-4,-3] \neq \emptyset}\mu(t) 
 & 
 = 70\cdot \frac{1}{10000} = \frac{7}{1000} \approx 0.007\\
\mu\supflat_{|\texttt{f}|}([-4,-3]) = & \sum_{t\in T, \img''^{\sharp}(t) \subseteq [-4,-3]}\mu(t) & = 
\sum_{t\in T'}\frac{1}{10000} = 5\cdot\frac{1}{10000} =\frac{5}{10000} \approx 0.0005
\end{array}
$$
In comparison the correct probability of the output event $[-4,-3]$ is $1/384\approx 0.00260417$ (verified in Mathematica).
\end{example}
The non-trivial lower probability bounds is a novel development, and Monniaux's  lifting framework is created with the purpose of deriving only upper probability bounds. 

\paragraph{Program \texttt{g}.}
The program \texttt{g} (\pageref{sec:casestudyinterval}) returns the sum three or four input variables, depending on the value of the fifth input variable. Due to the branching point in \texttt{g}, \ie, the if-expression, we see a difference in the upper probability bounds derived using the presented approach and using Monniaux's experimental analysis.

Again, the interval analysis $\img^{\sharp}_{|\texttt{g}|}$ is a forward analysis and we have used the formulas of Theorem~\ref{chp7:thm:simpleBounds}. The difference in the bounds stems from undecidability occurring in the branching point and to expose this difference we have chosen a partition causing such undecidability.
The input partition is the cartesian product $I_{1/3}\times I_{1/3}\times I_{1/3}\times I_{1/3}\times I_{1/3}$ where $I_{1/3}=\{[0,\frac{1}{3}],[\frac{1}{3},\frac{2}{3}],[\frac{2}{3},1]\}$. There are $243$ partition elements, and due to the uniform probability measure each of the partition elements has probability $1/243$. 
When comparing the results of the methods presented in this paper, shown by the solid black lines in Figure~\ref{chp7:fig:monniaux:g:res}, and the results obtained using Monniaux's experimental analysis~\cite{Monniaux2000}, shown by the blue dotted lines in Figure~\ref{chp7:fig:monniaux:g:res}, the methods presented here produce better results.\\

\noindent \emph{The difference.} In the following we demonstrate via an example what causes the difference between our results and those derived using Monniaux's experimental analysis~\cite{Monniaux2000}. Instead of studying the above partition with 243 elements, we study a smaller input partition with 6 elements, namely 
$T = \{{[0,1]}\} \times \{{[0,1]}\}\times \{{[0,1]}\}\times \{{[0,\frac{1}{2}]},{[\frac{1}{2},1]}\} \times \{{[0,\frac{1}{3}]},{[\frac{1}{3},\frac{2}{3}]},{[\frac{2}{3},1]}\}$
%
%
 where each partition element has input probability $1/6$ and we will derive the upper probability bound for the output event $[2\frac{1}{2},3\frac{1}{2}]$.
\begin{example}\label{ex:g:img-examples}
The interval analysis provides an image-overapproximating function $\img^{\sharp}_{|\texttt{g}|}$ over partition $T$:
$$
\renewcommand*{\arraystretch}{1.2}
\begin{array}{@{\img^{\sharp}_{|\texttt{g}|}(}c@{,}c@{,}c@{,}c@{,}c@{)\;=\;}c}
{[0,1]} & {[0,1]} &{[0,1]} & {[0,\frac{1}{2}]} & {[0,\frac{1}{3}]} & \{{[-3,2]}, \bot\}\\
 {[0,1]} & {[0,1]} & {[0,1]} &{[0,\frac{1}{2}]} & {[\frac{1}{3},\frac{2}{3}]} & \{{[-4,3]}, \bot\}\\
 {[0,1]} & {[0,1]} & {[0,1]} & {[0,\frac{1}{2}]} &{[\frac{2}{3},1]} & \{{[-4,3]}, \bot\}\\
{[0,1]} & {[0,1]} & {[0,1]} & {[\frac{1}{2},1]} & {[0,\frac{1}{3}]} & \{{[-2,3]}, \bot\}\\
 {[0,1]} & {[0,1]} & {[0,1]} & {[\frac{1}{2},1]} & {[\frac{1}{3},\frac{2}{3}]} & \{{[-3,4]}, \bot\}\\
 {[0,1]} & {[0,1]} & {[0,1]} & {[\frac{1}{2},1]} & {[\frac{1}{2},1]} & \{{[-3,4]}, \bot\}
\end{array}$$
Similar to the previous examples, we use Theorem~\ref{chp7:thm:simpleBounds} to infer an upper probability bound for output event $[2\frac{1}{2},3\frac{1}{2}]$.
\begin{align*}
\mu\supsharp_{|\texttt{f}|}([2\frac{1}{2},3\frac{1}{2}]) = & \sum_{t\in T, \img_{|\texttt{g}|}^{\sharp}(t) \cap [2\frac{1}{2},3\frac{1}{2}] \neq \emptyset}\mu(t) = \sum_{t\in T'}\mu(t) = 
5/6.
\end{align*}
where $T' = T \setminus \{({[0,1]},{[0,1]},{[0,1]},{[0,\frac{1}{2}]},{[0,\frac{1}{3}])}\}$.
\end{example}
Monniaux's experimental abstracts the input measure to a set containing pairs of interval environments  (similar to the interval analysis) and a weight, \ie, the input probability of that interval environment, \eg, $\langle E, w \rangle$,. Afterwards, it propagates the pairs through the program according to the specifications of the interval analysis; at the branching points the lifted interval analysis duplicates the pair, \eg, $\langle E, w \rangle$ and $\langle E, w \rangle$,  and adjust the environments according to the condition and its negation to avoid infeasible environments, \eg, $\langle E_1, w \rangle$ and $\langle E_2, w \rangle$, \eg, see following example. When the set is propagated all the way through the program, the probability of an output event is the sum of the weights of those environments yielding output which overlaps with the output event. 

In the following we propagate the pairs through the program and calculates the upper probability bound; we provide the set of pairs at the program points at \texttt{*}, \texttt{**}, and \texttt{***}. 
\begin{example}
At program point \texttt{*} in \texttt{g} (p.~\pageref{sec:casestudyinterval}) the environments are still recognizable from the input partition elements with the addition of variable \texttt{\emph{x}}. The interval of \texttt{\emph{x5}} in the second and fifth pair may lead to the condition being true or false (undecidable); this causes a split of each the those pairs into two whose \texttt{\emph{x5}} intervals are updated according to the branch condition. Afterwards, \texttt{\emph{x}} is updated as expected; according to the branch (or skip) before reaching program point \texttt{**}), and again upon reaching program point \texttt{***}.
In the following, we have only included the essential parts of the environments essential, \eg, at every program point we have left out $\emph{\texttt{x1}} \mapsto {[0,1]}, \emph{\texttt{x2}} \mapsto {[0,1]}, \emph{\texttt{x3}} \mapsto {[0,1]}$. 
%
$$\renewcommand*{\arraystretch}{1.2}
\begin{array}{@{}lll}
\text{Program point \texttt{*}:} & Program point \text{\texttt{**}:} & \text{Program point \texttt{***}:} \\
\begin{array}{@{}l@{}}
\{\!\langle \mydots, \texttt{\emph{x4}} {\mapsto} [0,\frac{1}{2}], \texttt{\emph{x5}} {\mapsto} [0,\frac{1}{3}], \texttt{\emph{x}} {\mapsto} [0,0] ; \frac{1}{6}\rangle, \\ \phantom{\{\!}
\langle \mydots, \texttt{\emph{x4}} {\mapsto} [0,\frac{1}{2}],  \texttt{\emph{x5}} {\mapsto} [\frac{1}{3},\frac{2}{3}], \texttt{\emph{x}} {\mapsto} [0,0] ; \frac{1}{6}\rangle, \\ \\ \phantom{\{\!}
\langle \mydots, \texttt{\emph{x4}} {\mapsto} [0,\frac{1}{2}],  \texttt{\emph{x5}} {\mapsto} [\frac{2}{3},1], \texttt{\emph{x}} {\mapsto} [0,0] ; \frac{1}{6}\rangle, \\ \phantom{\{\!}
\langle \mydots, \texttt{\emph{x4}} {\mapsto} [\frac{1}{2},1],   \texttt{\emph{x5}} {\mapsto} [0,\frac{1}{3}], \texttt{\emph{x}} {\mapsto} [0,0] ; \frac{1}{6}\rangle, \\  \phantom{\{\!}
\langle \mydots, \texttt{\emph{x4}} {\mapsto} [\frac{1}{2},1],  \texttt{\emph{x5}} {\mapsto} [\frac{1}{3},\frac{2}{3}], \texttt{\emph{x}} {\mapsto} [0,0] ; \frac{1}{6}\rangle, \\ \\ \phantom{\{\!}
\langle \mydots, \texttt{\emph{x4}} {\mapsto} [\frac{1}{2},1],   \texttt{\emph{x5}} {\mapsto} [\frac{2}{3},1], \texttt{\emph{x}} {\mapsto} [0,0] ; \frac{1}{6}\rangle \!\}\\
\end{array}&
\begin{array}{@{}l@{}}
\{\!\langle \mydots, \texttt{\emph{x5}} {\mapsto} [0,\frac{1}{3}], \texttt{\emph{x}} {\,\mapsto} [0,0] ; \phantom{-} \frac{1}{6}\rangle, \\ \phantom{\{\!}
\langle \mydots, \texttt{\emph{x5}} {\mapsto} [\frac{1}{3},\frac{1}{2}], \texttt{\emph{x}} {\,\mapsto} [0,0] ; \phantom{-} \frac{1}{6}\rangle, \\ \phantom{\{\!}
\langle \mydots, \texttt{\emph{x5}} {\mapsto} [\frac{1}{2},\frac{2}{3}], \texttt{\emph{x}} {\,\mapsto} [-1,1] ; \frac{1}{6}\rangle, \\ \phantom{\{\!}
\langle \mydots, \texttt{\emph{x5}} {\mapsto} [\frac{2}{3},1], \texttt{\emph{x}} {\,\mapsto} [-1,1] ; \frac{1}{6}\rangle, \\ \phantom{\{\!}
\langle \mydots, \texttt{\emph{x5}} {\mapsto} [0,\frac{1}{3}], \texttt{\emph{x}} {\,\mapsto} [0,0] ; \phantom{-} \frac{1}{6}\rangle, \\ \phantom{\{\!}
\langle \mydots, \texttt{\emph{x5}} {\mapsto} [\frac{1}{3},\frac{1}{2}], \texttt{\emph{x}} {\,\mapsto} [0,0] ; \phantom{-} \frac{1}{6}\rangle, \\ \phantom{\{\!}
\langle \mydots, \texttt{\emph{x5}} {\mapsto} [\frac{1}{2},\frac{2}{3}], \texttt{\emph{x}} {\,\mapsto} [-1,1] ; \frac{1}{6}\rangle, \\ \phantom{\{\!}
\langle \mydots, \texttt{\emph{x5}} {\mapsto} [\frac{2}{3},1], \texttt{\emph{x}} {\,\mapsto} [-1,1] ; \frac{1}{6}\rangle\!\}\\
\end{array}&
\begin{array}{@{}l@{}}
\{\!\langle \mydots, \texttt{\emph{x}} {\,\mapsto} [-3,2] ; \frac{1}{6}\rangle, \\ \phantom{\{\!}
\langle \mydots, \texttt{\emph{x}} {\,\mapsto} [-3,2] ; \frac{1}{6}\rangle, \\ \phantom{\{\!}
\langle \mydots, \texttt{\emph{x}} {\,\mapsto} [-4,3] ; \frac{1}{6}\rangle, \\ \phantom{\{\!}
\langle \mydots,  \texttt{\emph{x}} {\,\mapsto} [-4,3]; \frac{1}{6} \rangle, \\ \phantom{\{\!}
\langle \mydots, \texttt{\emph{x}} {\,\mapsto} [-2,3] ; \frac{1}{6}\rangle, \\ \phantom{\{\!}
\langle \mydots,  \texttt{\emph{x}} {\,\mapsto} [-2,3]; \frac{1}{6} \rangle, \\ \phantom{\{\!}
\langle \mydots,  \texttt{\emph{x}} {\,\mapsto} [-3,4]; \frac{1}{6} \rangle, \\ \phantom{\{\!}
\langle \mydots,  \texttt{\emph{x}} {\,\mapsto} [-3,4]; \frac{1}{6} \rangle\!\}\\
\end{array}
\end{array}
$$
The upper probability bound of output event $[2\frac{1}{2},3\frac{1}{2}]$ is the sum of the weights of the pairs for which the \texttt{x}'s value overlaps with $[2\frac{1}{2},3\frac{1}{2}]$ at program point \texttt{***}. The last six pairs are the only ones where \texttt{x} interval overlap with $[2\frac{1}{2},3\frac{1}{2}]$, thus, Monniaux's experimental analysis~\cite{Monniaux2000} derives $6\cdot \frac{1}{6} = 1$ as the upper probability bound for $[2\frac{1}{2},3\frac{1}{2}]$. Thus, the bound ($5/6$) derived by the presented technique is tighter.
\end{example}
\FloatBarrier
\section{Conclusion}\label{sec:conclusion}
%
We have presented two simple techniques for reusing existing (non-probabilistic) analyses to derive upper and lower probability bounds of output events; introducing abstraction when the pre-image is non-measurable. We demonstrated forward technique and the initial results are powerful compared to more complex analyses.

%
%
%
%

\FloatBarrier
\bibliographystyle{eptcs}
\bibliography{Updated}
\end{document}